\newcommand{\svv}[1]{\mathbf{#1}}
\DeclareRobustCommand{\Nt}[1][Nt]{\ensuremath {N_t}}
\DeclareRobustCommand{\alNt}[1][Nt]{\alpha(K)}
\DeclareRobustCommand{\aNorm}[1][aNorm]{\ensuremath {\|{\bf a}\|}}
\newcommand{\SNR}{\text{$\mathsf{SNR}$}}
\newtheorem{lemma}{Lemma}
\newtheorem{theorem}{Theorem}
\newtheorem{definition}{Definition}
\newcommand{\diag}{\mathop{\mathrm{diag}}}
\DeclareRobustCommand{\prob}[1][{\rm Pr}]{\ensuremath {{#1}}}
\DeclareRobustCommand{\genGam}[1][\beta]{\ensuremath {{#1}}}
\begin{document}
\allowdisplaybreaks
%
% paper title
% can use linebreaks \\ within to get better formatting as desired
\title{Achievability Performance Bounds for Integer-Forcing Source Coding}

\author{Elad Domanovitz and Uri Erez
\thanks{The work of E. Domanovitz and U. Erez was supported in part by the Israel Science Foundation under Grant No. 1956/17 and
 by the Heron consortium via the Israel Ministry of Economy
and Industry.}
%\thanks{The material in this paper was presented in %part at the 2017 IEEE
%Information Theory Workshop,
%Kaohsiung, Taiwan.}
\thanks{The material in this paper was presented in part at the 2017 IEEE
Information Theory Workshop, Kaohsiung.}
\thanks{E. Domanovitz and U. Erez are with the Department of Electrical Engineering -- Systems, Tel Aviv University, Tel Aviv, Israel (email: domanovi,uri@eng.tau.ac.il).}
}
\maketitle

% use for special paper notices
%\IEEEspecialpapernotice{(Invited Paper)}

% make the title area

\begin{abstract}
%\boldmath
Integer-forcing source coding has been proposed as  a low-complexity method for compression of distributed correlated Gaussian sources. In this scheme, each encoder
quantizes its observation using the same fine lattice and reduces the result modulo a coarse lattice. Rather than directly recovering the individual
quantized signals, the decoder first recovers a full-rank set of
judiciously chosen integer linear combinations of the quantized
signals, and then inverts it.
It has been observed that the method works very well for ``most" but not all source covariance matrices. The present work quantifies the measure of bad covariance matrices by studying the probability that integer-forcing source coding fails as a function of the allocated rate, where the probability is with
respect to a random orthonormal transformation that is applied to the sources prior to quantization. For the important case where the signals to be compressed correspond to the antenna inputs of relays in an i.i.d. Rayleigh fading environment, this orthonormal transformation can be viewed as being performed by nature.
The scheme is also studied in the context of a non-distributed system. Here, the goal is to arrive at a universal, yet practical, compression method using equal-rate quantizers with provable performance guarantees. The scheme is universal in the sense
that the covariance matrix need only be learned at the decoder
but not at the encoder. The goal is accomplished by replacing the random orthonormal transformation by  transformations corresponding to  number-theoretic space-time codes.

% [ELAD - We further derive a guaranteed gap from the Berger-Tung benchmark which is achieved when non-vanishing determinant precoding is applied in conjunction with integer-forcing source coding. This configuration is no longer distributed, yet, we show that there exists a linear compression method which produces equal-rate streams with guaranteed performance. Since the gap is relatively high, we show that a much tighter bound can be achieved for the case of uncorrelated sources (where space-only precoding) is considered.]
\end{abstract}
% IEEEtran.cls defaults to using nonbold math in the Abstract.
% This preserves the distinction between vectors and scalars. However,
% if the journal you are submitting to favors bold math in the abstract,
% then you can use LaTeX's standard command \boldmath at the very start
% of the abstract to achieve this. Many IEEE journals frown on math
% in the abstract anyway.

% Note that keywords are not normally used for peerreview papers.
%\begin{IEEEkeywords}
%IEEEtran, journal, \LaTeX, paper, template.
%\end{IEEEkeywords}

% For peer review papers, you can put extra information on the cover
% page as needed:
% \ifCLASSOPTIONpeerreview
% \begin{center} \bfseries EDICS Category: 3-BBND \end{center}
% \fi
%
% For peerreview papers, this IEEEtran command inserts a page break and
% creates the second title. It will be ignored for other modes.
\IEEEpeerreviewmaketitle

\section{Introduction}

Integer-forcing (IF) source coding, proposed in \cite{Integer-ForcingSourceCoding:OrdentlichErez2017}, is a
scheme for distributed lossy compression of correlated Gaussian sources under a minimum mean squared error distortion measure.
Similar to its channel coding counterpart, in this scheme, all
encoders use the same nested lattice codebook.
Each encoder quantizes its observation using the fine lattice as a quantizer and reduces the result modulo the coarse lattice, which plays the role of binning. Rather than directly recovering the individual quantized signals, the decoder first recovers a full-rank set of judiciously chosen integer linear combinations of the quantized signals, and then inverts it. An appealing feature of integer-forcing source coding, not shared by previously proposed practical methods (e.g., Wyner-Ziv coding) for the distributed source coding problem, is its inherent symmetry, supporting equal distortion and  quantization rates.
A potential application of IF source coding is to distributed compression of signals received at several relays as suggested in  \cite{Integer-ForcingSourceCoding:OrdentlichErez2017} and further explored in \cite{ordonez2016integer}.

%An  application of the distributed
% compression problem is that of compressing %observations of correlated relays or sensors %\cite{Integer-ForcingSourceCoding:OrdentlichErez2017}. %As a concrete application example, consider a cloud %radio access
% (C-RAN) network as depicted in Figure~\ref{fig:C-ran %topology}. %Cloud radio access networks (C-RAN) are a %novel architecture for %next-generation wireless %cellular
%%systems where the baseband processing is migrated %from the base stations to a control unit in the %“cloud.”
%Here, the base stations, which operate as radio units, %are connected via fronthaul links to the managing
%control unit. The fronthaul links carry information %about the baseband signals in the uplink from the %radio units to the control unit. Due to the large bit %rates produced by the quantized signals, compression %prior to transmission on the fronthaul links is %%important. As the signals received by nodes may be h%ighly correlated, and as the nodes may by physically s%eparated, distributed compression is called for.

% Recently (\cite{ordonez2016integer}), it has been demonstrated that integer-forcing source coding has good performance when considering low resolution quantizers for massive MIMO application. Standard low resolution quantizers show poor performance mainly in case of  large-scale fading and imperfect power control, which results in received power imbalances among the different users and
% renders the communication with the weakest users unfeasible.

Similar to IF channel coding, IF source
coding works well for ``most" but not all
Gaussian vector sources.
Following the approach of \cite{OutageBehaviorOfRandomlyPrecoded:Domanovitz2015}, in the present work
we quantify the measure of bad source covariance matrices by considering  a randomized version of IF source coding where a random orthogonal transformation  is applied to the sources prior to quantization. Specifically, we derive bounds on the worst-case (with respect to a compound class of Gaussian sources)  outage probability when the precoding matrices are drawn from the circular real ensemble (CRE); that is, they are uniformly distributed with respect to the Haar measure. While in general such a transformation implies joint processing at the encoders, we note that in some natural scenarios,  including that of distributed compression of signals received at relays in an i.i.d. Rayleigh fading environment, the random transformation is actually performed by nature.\footnote{This follows since the left and right singular vector matrices of an i.i.d. Gaussian matrix $\svv{K}$ are equal to the eigenvector matrices of the  Wishart ensembles $\svv{K}\svv{K}^{T}$ and $\svv{K}^T\svv{K}$, respectively. The latter are known to be uniformly (Haar) distributed. See, e.g., Chapter~4.6 in \cite{edelman2005random}.}
% \footnote{Specifically, note that the second order statistics of a random matrix with uncorrelated entries of equal variance remains unchanged by a unitary transformation applied at either side. As the distribution of a complex Gaussian matrix with i.i.d. entries is unaffected by such a transformation, it is readily seen  that the (right and left) singular vectors of such a matrix follow the circular unitary distribution.}
In fact, it was already empirically observed in  \cite{Integer-ForcingSourceCoding:OrdentlichErez2017}
that IF source  coding performs very well in the latter scenario.

We then consider the performance of IF source coding when  used
in conjunction with judiciously chosen (deterministic) unitary precoding.
Such precoding requires joint processing of the different sources prior to quantization and is thus precluded in a distributed setting. Nonetheless, the scheme has an important advantage with respect to traditional centralized source coding of correlated sources. 
Whereas the traditional approach requires utilizing the statistical characterization of the source at the encoder side, e.g., via the application of appropriate transform coding, 
%a prediction filter or via applying a transformation (such as DCT or DFT) along with \emph{bit loading}. 
the considered compression scheme, in contrast, applies a \emph{universal} transformation, i.e., a transformation that is independent of the source statistics. Furthermore, no bit loading is needed. That is, the components of the output  of the transformation are all quantized at the same rate and hence the operation of the encoder does not depend on the source statistics. 
%Namely, the operation of the encoder %is universal, i.e., it does not %depend on the statistics of the %source. 
This characteristic may be advantageous in scenarios where the complexity of the encoder should be kept to a minimum whereas larger computational resources are available in the reconstruction stage where of course knowledge of the statistics of the source has to be obtained and utilized in some manner. We refer the reader to \cite{romanov2019blind} which describes a practical method by which the decoder may estimate the statistics of the source from the modulo-reduced quantized samples, whereas the encoder may remain oblivious to the source statistics.
%a blind manner, i.e., 
%In addition, all samples are quantized at the same bit rate. %Knowledge of the statistics of the source needs of course  be %utilized, but only at the decoder side,
%a property that may be advantageous in certain applications.
% In this respect the scheme is similar to Wyner-Ziv compression. However,  whereas the latter  requires in general the use of  bit allocation for the different , IF source coding does not: after applying a universal orthogonal transformation, all streams are encoded via independent quantizers of equal rate. This property is advantageous in certain applications; see \cite{Integer-ForcingSourceCoding:OrdentlichErez2017} for a more detailed comparison.}

For such a centralized application of IF source coding, we are able to derive stronger performance guarantees than those we derive for the distributed setting. Namely, in the centralized setting, we show that by employing transformations derived from algebraic number-theoretic constructions, successful source reconstruction can be guaranteed for any Gaussian source in the considered compound class. Thus, we need not allow for outage events. For  general Gaussian sources, such guarantees require performing precoding jointly over space and time. In contrast, for the case of parallel (independent) Gaussian sources with different variances, space-only precoding suffices. This distinction closely mirrors the well-known results concerning transformations for achieving maximal diversity for communication over fading channels, as described in \cite{oggier2004algebraic}.

The rest of this paper is organized as follows.
Section~\ref{sec:formulation} formulates the
problem of distributed compression of Gaussian sources in a compound vector source setting and provides some relevant background on IF source coding.
Section~\ref{sec:cre} describes randomly precoded
IF source coding and its empirical performance.
Section~\ref{sec:upper} derives upper bounds on the probability of failure of randomly-precoded
IF as a function of the excess rate.
In Section~\ref{sec:additive}, deterministic linear precoding is considered. A bound on the  worst-case necessary excess rate
%gap from the achievable Berger-Tung rate
is derived for any number of sources with any correlation matrix when space-time precoding derived from non-vanishing determinant codes is used. Further, we show that this bound can be significantly tightened for the case of uncorrelated sources.
%Section~\ref{sec:cran} applies the results to the problem of compression for a cloud radio access network in an i.i.d. Rayleigh fading environment.
%Section~\ref{sec:cran} derives a universal bound
%on the performance when compression of Gaussian %sources is performed with the aid of algebraic %space-time precoding in conjunction with IF %source coding.
%Concluding remarks appear in Section~\ref{sec:additive}.

\section{Problem Formulation and Background}
\label{sec:formulation}
In this section we provide the problem formulation
and briefly recall the achievable rates of IF source coding as developed in \cite{Integer-ForcingSourceCoding:OrdentlichErez2017}. We refer the reader to the latter  for an introduction and overview of IF source coding.

\subsection{Distributed Compression of Gaussian Sources}
We start by recalling the classical problem of distributed lossy compression of jointly Gaussian real random variables under a quadratic distortion measure. Specifically, we consider a distributed source coding setting with $K$ encoding terminals
and one decoder. Each of the $K$ encoders has access to a vector
%${\bf x}_k \in \mathbb{R}^n$
of $n$ i.i.d. realizations of the random variable $x_k$, $k=1,..., K$.\footnote{The time axis will be suppressed in the sequel and vector notation will be reserved to
describe samples taken from different sources.}
The random vector ${\bf x} = (x_1, \ldots, x_K)^T$ (corresponding to the different sources)
is assumed to be Gaussian with zero mean and covariance matrix $\svv{K}_{\bf xx}\triangleq \mathbb{E}\left(\bf{x}\bf{x}^T\right)$.

Each encoder maps its observation $\bf{x}_k$ to an index using the encoding function
\begin{align}
\mathcal{E}_k~:~\mathbb{R}^n\rightarrow\left\{1,...,2^{nR_k}\right\},
\end{align}
and sends the index to the decoder.

The decoder is equipped with $K$ decoding functions
\begin{align}
\mathcal{D}_k~:~\left\{1,...,2^{nR_1}\right\}\times\cdots\times\left\{1,...,2^{nR_K}\right\}\rightarrow\mathbb{R}^n,
\end{align}
where $k=1,...,K$. Upon receiving $K$ indices, one from each terminal, it generates the estimates
\begin{align}
\hat{\bf{x}}_k=\mathcal{D}_k\left(\mathcal{E}_1({\bf x}_1),...,\mathcal{E}_K({\bf x}_K)\right),~~k=1,...,K.
\end{align}
A rate-distortion vector $(R_1,...,R_K; d_1,...,d_K)$ is achievable if there exist encoding functions, $\mathcal{E}_1,...,\mathcal{E}_K$, and decoding functions, $\mathcal{D}_1,...,\mathcal{D}_K$, such that $\frac{1}{n}\mathbb{E}\left(\|{\bf x}_k-\hat{\bf x}_k\|^2\right)\leq d_k$,
% \begin{align}
% \frac{1}{n}\mathcal{E}\left(\|{\bf x}_k-\hat{\bf x}_k\|^2\right)\leq d_k,
% \end{align}
for all $k=1,...,K$.

We focus on the symmetric case where
$d_1=\cdots=d_K=d$
and
$R_1=\cdots=R_K=R/K$ where we denote the sum rate by $R$. The best known achievable scheme (for this symmetric setting; see \cite{Integer-ForcingSourceCoding:OrdentlichErez2017}, Section~I) is that of Berger and Tung  \cite{Multiterminalsourcecodig:Tung1978},
%although some examples (in an asymmetric setting) where Berger-Tung %compression can be outperformed are known %\cite{ComputationOverMultipleAccessChannels:NazerGastpar2007}.
for which the following (in general, suboptimal) sum rate is achievable
\begin{align}
\sum_{k=1}^{K}R_k&\geq\frac{1}{2}\log\det{ \left(\svv{I}+\frac{1}{d}\svv{K}_{\bf xx}\right)} \nonumber \\
&\triangleq R_{\rm BT}.
\end{align}
% \begin{align}
% \sum_{k=1}^{K}R_k&\geq\frac{1}{2}\log\frac{\left|\svv{K}_{\bf xx}+d\svv{I}\right|}{\left|d\svv{I}\right|}\\
% &=\frac{1}{2}\log\left|\svv{I}+\frac{1}{d}\svv{K}_{\bf xx}\right|.
% \end{align}
As shown in \cite{Integer-ForcingSourceCoding:OrdentlichErez2017}, $R_{\rm BT}$ is a lower bound on the achievable rate of IF source coding. We will refer to $R_{\rm BT}$ as the Berger-Tung benchmark. To simplify notation, we note that $d$ can be ``absorbed'' into $\svv{K}_{\bf xx}$. Hence, without loss of generality, we assume throughout  that $d=1$.

%===============================
\subsection{Compound Source Model And Scheme Outage
Formulation}
%===============================
Consider distributed lossy compression of a vector of  Gaussian sources
\begin{align}
\svv{x}\sim\mathcal{N}(0,\svv{K}_{\bf xx}).
\end{align}

We define the following compound class of Gaussian sources, having the same value of $R_{\rm BT}$, via their covariance matrix:
\begin{align}
\mathcal{K}(R_{\rm BT}) = \left\{\svv{K}_{\bf xx}\in\mathbb{R}^{K\times K}: \frac{1}{2}\log\det\left(\svv{I}+\svv{K}_{\bf xx}\right)=R_{\rm BT}\right\}.
\end{align}

We quantify the  measure of the set of source covariance matrices by considering outage events, i.e., those events (sources) where integer forcing fails to achieve the desired level of distortion even though the rate
exceeds $R_{\rm BT}$.
More broadly, for a given quantization scheme, denote the necessary rate to achieve $d=1$ for a given covariance matrix $\svv{K_{xx}}$ as $R_{\rm scheme}(\svv{K_{xx}})$.
Then, given a target rate $R>R_{\rm BT}$ and a covariance matrix $\svv{K_{xx}}\in\mathcal{K}(R_{\rm BT})$, a scheme outage  occurs when $R_{\rm scheme}(\svv{K_{xx}})>R$.
%For the case of integer forcing, the explicit expression for $R_{\rm IF}(\svv{H}_c)$ is recalled in
%Section~\ref{sec:IFequalizationSubSecNo}.

To quantify the measure of ``bad" covariance matrices, we follow \cite{OutageBehaviorOfRandomlyPrecoded:Domanovitz2015} and apply a random orthonormal precoding matrix to the  (vector of) source samples prior to  encoding. As mentioned above, this amounts to joint processing of the samples
and
hence the problem is no longer distributed in general. Nonetheless, as in the scenario described in Section~\ref{sec:cran}, in certain statistical settings, this precoding operation is redundant as it can be viewed as being performed by nature.
%Considering random precoding allow us to derive bounds on the outage probability for the compound class.

Applying a precoding matrix to the source vector, we obtain a transformed source vector
\begin{align}
    \tilde{\bf x}=\svv{P}{\bf x},
\end{align}
with covariance matrix
\begin{align}
    \svv{K}_{\tilde x \tilde x}&=\svv{P}\svv{K}_{\bf x x}\svv{P}^T.
\end{align}

It follows that the achievable rate of a quantization scheme for the  precoded source is $R_{\rm scheme}(\svv{K}_{\tilde {\bf x} \tilde {\bf x}})$. When $\svv{P}$ is drawn at random, the latter rate is also random.
The worst-case (WC) scheme outage probability is defined in turn as
\begin{align}
     %\prob_{\rm OPT, OL, out}(\SNR)\triangleq\prob\left[\log_2\left|\svv{I}_{\Nt\times \Nt}+\frac{\SNR}{\Nt}\svv{H}\svv{H}^H\right|<R(\SNR)\right].
     %\prob_{\rm Outage}\leq\prob_{\forall\svv{H}\in\mathbb{H},\svv{P}}\left[R_{\svv{P}}<R\right].
    &  P^{\rm WC}_{\rm out,scheme}\left(R_{\rm BT},\Delta R\right)  \nonumber \\
      & = \sup_{\svv{K}_{\bf xx}\in\mathcal{K}(R_{\rm BT})}\prob\left(R_{\rm scheme}(\svv{K}_{\tilde {\bf x} \tilde {\bf x}})>
     R_{\rm BT}+\Delta R\right),
     \label{eq:P_WC_OUT}
\end{align}
where the probability is over the ensemble of precoding matrices considered and $\Delta R$ is the gap to the Berger-Tung benchmark.

%The goal of this paper is to
In the sequel, we quantify the tradeoff between the quantization rate $R$ (or equivalently, between the excess rate $\Delta R=R-R_{\rm BT}$) and the outage probability $P^{\rm WC}_{\rm out,IF}\left(R_{\rm BT},\Delta R\right)$ as defined in (\ref{eq:P_WC_OUT}).

%===============================
\subsection{Integer-Forcing Source Coding}
%===============================
In a manner similar to IF equalization for channel coding, IF can be applied to the problem of distributed lossy compression. The  approach is based on standard  quantization followed by lattice-based binning. However,  in the IF framework, the decoder first uses the bin indices for recovering linear combinations with integer coefficients of the quantized signals, and only then
recovers the quantized signals themselves.

For our purposes, it suffices to state only
the achievable sum rate of IF source coding. 
We refer the reader to  \cite{Integer-ForcingSourceCoding:OrdentlichErez2017}
for the derivation and proofs.

\begin{theorem}[ \cite{Integer-ForcingSourceCoding:OrdentlichErez2017}, Theorem~1]
For any distortion $d > 0$ and any full-rank integer matrix $\svv{A} = [{\bf a}_1\cdots{\bf a}_K]^T\in \mathbb{Z}^{K\times K}$, there exists a (sequence of)
nested lattice pair(s)  $\Lambda\in\Lambda_f$ such that IF source coding can
achieve any sum rate satisfying
\begin{align}
    R>R_{\rm IF}(\svv{K}_{\bf x x},d;\svv{A})\triangleq \frac{K}{2}\log\left(\max_{k=1,\ldots,K}{\bf a}_k^T\left(\svv{I}+\frac{1}{d}\svv{K}_{\bf x x}\right){\bf a}_k\right)
\end{align}
\end{theorem}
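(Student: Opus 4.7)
The plan is to follow the standard nested-lattice recipe used throughout the integer-forcing literature, adapted here to quantization rather than equalization. I will pick a nested pair $\Lambda\subseteq\Lambda_f$ and have encoder $k$ perform subtractively-dithered quantization with the fine lattice, $\hat{\bf x}_k = Q_{\Lambda_f}({\bf x}_k+{\bf d}_k)$, and then transmit the coarse bin index $[\hat{\bf x}_k]\bmod\Lambda$. The common dithers ${\bf d}_k$, uniform on $\mathcal{V}(\Lambda_f)$, are shared with the decoder. The sum rate is $\tfrac{1}{n}\log\bigl(\operatorname{Vol}(\mathcal{V}(\Lambda))/\operatorname{Vol}(\mathcal{V}(\Lambda_f))\bigr)$.

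The next step is to analyze what the decoder can do with the $K$ bin indices. By the nested-lattice property,
\[
\Bigl[\sum_j a_{kj}\bigl([\hat{\bf x}_j]\bmod\Lambda\bigr)\Bigr]\bmod\Lambda
=\Bigl[\sum_j a_{kj}\hat{\bf x}_j\Bigr]\bmod\Lambda,
\]
so it suffices to bound when the integer combination ${\bf u}_k=\sum_j a_{kj}\hat{\bf x}_j$ lies in $\mathcal{V}(\Lambda)$. Writing $\hat{\bf x}_j={\bf x}_j+{\bf d}_j+{\bf e}_j$, where by the Crypto Lemma ${\bf e}_j$ is uniform on $-\mathcal{V}(\Lambda_f)$ and independent of ${\bf x}_j$, and noting the dithers are canceled at the decoder, the ``effective noise'' is $\sum_j a_{kj}({\bf x}_j+{\bf e}_j)$, whose per-dimension second moment is exactly
\[
{\bf a}_k^T\svv{K}_{\bf xx}{\bf a}_k+\|{\bf a}_k\|^2\sigma_f^2,
\]
where $\sigma_f^2$ is the per-dimension second moment of $\mathcal{V}(\Lambda_f)$. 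Since $\svv{A}$ is full-rank, once the decoder recovers all $K$ combinations modulo $\Lambda$ it inverts $\svv{A}$ over $\mathbb{Q}$ and subtracts the dithers to obtain $\hat{\bf x}_k={\bf x}_k+{\bf e}_k$, yielding per-sample distortion $\sigma_f^2$, which we set to $d$.

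The final ingredient is goodness of the nested pair. I invoke the existence of a sequence of nested lattices in which $\Lambda_f$ is simultaneously good for mean-squared-error quantization (so $\sigma_f^2\to d$) and $\Lambda$ is good for coding over AWGN (Poltyrev-good), so that any noise of per-dimension variance strictly smaller than the effective variance of $\mathcal{V}(\Lambda)$ lies in the Voronoi cell with probability approaching one. Taking the Voronoi cell of $\Lambda$ large enough to simultaneously contain each of the $K$ effective noise vectors forces
\[
\sigma_\Lambda^2>\max_{k}\bigl({\bf a}_k^T\svv{K}_{\bf xx}{\bf a}_k+\|{\bf a}_k\|^2 d\bigr),
\]
and combining this with $\sigma_f^2=d$ in the volume ratio gives a per-encoder rate $\tfrac{1}{2}\log\max_k{\bf a}_k^T(\svv{I}+\svv{K}_{\bf xx}/d){\bf a}_k$; multiplying by $K$ yields the claimed sum rate.

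The main obstacle is the simultaneous-goodness requirement on the nested pair: I must cite (or construct) a sequence of pairs $\Lambda\subseteq\Lambda_f$ in which $\Lambda_f$ is a good quantizer (normalized second moment tending to $1/(2\pi e)$) while $\Lambda$ remains Poltyrev-good so that all $K$ modulo-$\Lambda$ decoding steps succeed with vanishing error in the union bound over $k=1,\dots,K$. This existence result is standard in the nested-lattice literature, but it is where the asymptotic ``$>$'' in the rate bound, rather than ``$\geq$'', must be absorbed to leave room for the goodness gaps.
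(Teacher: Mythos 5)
Your sketch is essentially the canonical proof of this result: the paper itself does not prove the theorem but imports it from \cite{Integer-ForcingSourceCoding:OrdentlichErez2017}, and your dither--quantize--bin--recover-combinations--invert argument, with the rate read off from the volume ratio $\sigma_\Lambda^2/\sigma_f^2$ and the effective-variance bound $\max_k\bigl({\bf a}_k^T\svv{K}_{\bf xx}{\bf a}_k+\|{\bf a}_k\|^2 d\bigr)$, is exactly the derivation given there. The only point worth tightening is the one you already flag: the effective noise is a Gaussian-plus-lattice-self-noise mixture rather than AWGN, so the coarse lattice must be good for coding against semi norm-ergodic noise (not merely Poltyrev-good for Gaussian noise), which is precisely what the standard simultaneous-goodness existence results for nested pairs supply.
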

We further note that this sum rate is achieved via symmetric rate allocation, i.e., by allocating $R_k=R_{\rm IF}(\svv{K}_{\bf x x},d;\svv{A})/K$ bits to each of the $K$ encoders. 
%We recall Theorem~1 from \cite{Integer-ForcingSourceCoding:OrdentlichErez2017}, stating that for any covariance matrix $\svv{K}_{\bf xx}$,
%and any choice of
%$\svv{A} = [{\bf a}_1\cdots {\bf a}_K]^T \in \mathbb{Z}^{K\times %K}$,
%%there exists a (sequence
%%of) nested lattice pair(s) $\Lambda \subset \Lambda_f$ such %that
% IF source coding can achieve any sum rate  satisfying
%\begin{align}
%R>R_{\Sigma-IF}(\svv{A})\triangleq\frac{K}{2}\log\left(\max_{k=1,..%.,K}{\bf a}_k^T\left(\svv{I}+\svv{K}_{\bf xx}\right){\bf %a}_k\right).
%\end{align}
%For the optimal choice of $\svv{A}$,
Since we assume $d=1$, it follows that IF source coding can achieve any (sum) rate satisfying
\begin{align}
R&>R_{\rm IF}^{\rm opt}(\svv{K}_{\bf x x})\nonumber \\
& \triangleq R_{\rm IF}(\svv{K}_{\bf x x};\svv{A}^{\rm opt}) \nonumber \\
& = \frac{K}{2}\log\left(\min_{\substack{\svv{A}\in\mathbb{Z}^{K\times K} \\ \det{\svv{A}}\neq0}}\max_{k=1,...,K}{\bf a}_k^T\left(\svv{I}+\svv{K}_{\bf xx}\right){\bf a}_k\right),
\label{eq:R_IF(A)}
\end{align}
where $\svv{a}_k^T$ is the $k$th row of the integer matrix $\svv{A}$.

The matrix $\svv{I}+\svv{K}_{\bf xx}$ is symmetric and positive definite, and therefore it admits a Cholesky decomposition
\begin{align}
    \svv{I}+\svv{K}_{\bf xx}=\svv{F}\svv{F}^T.
    \label{eq:def_F}
\end{align}
With this notation, we have
\begin{align}
R_{\rm IF}^{\rm opt}(\svv{K}_{\bf x x})=\frac{K}{2}\log\left(\min_{\substack{\svv{A}\in\mathbb{Z}^{K\times K} \\ \det{\svv{A}}\neq0}}\max_{k=1,...,K}\|\svv{F}^T\bf{a}_k\|^2\right).
\label{eq:sum_rate_char}
\end{align}

Denote by $\Lambda(\svv{F}^T)$ the $K$-dimensional lattice spanned by the matrix $\svv{F}^T$, i.e.,
\begin{align}
\Lambda(\svv{F}^T)\triangleq\left\{\svv{F}^T\bf{a}:~a\in\mathbb{Z}^K \right\}.
\end{align}
Then the problem of finding the optimal matrix $\svv{A}$ is equivalent to finding the shortest set of $K$ linearly independent vectors in $\Lambda(\svv{F}^T)$.
%Moreover, the rate per encoder achieved by IF source %coding can be expressed using the
In other words, the rate per encoder achieved by IF source coding can be expressed using the $k$th successive minimum of the lattice $\Lambda(\svv{F}^T)$, where we recall that in general, for a full-rank $K \times K$ matrix $\svv{G}$:
\begin{definition}(successive minima)
Let $\Lambda(\svv{G})$ be a lattice spanned by the full-rank matrix $\svv{G}\in\mathbb{R}^{K\times K}$. For $k = 1, ... , K$, we define the $k$'th successive minimum as
\begin{align}
\lambda_k(\svv{G})\triangleq\inf\left\{r:{\rm dim}\left({\rm span}\left(\Lambda(\svv{G})\cap\mathcal{B}_{K}(r)\right)\right)\geq k\right\}
\end{align}
where $\mathcal{B}_{K}(r)=\left\{{\bf x}\in\mathbb{R}^K:\|{\bf x}\|\leq r\right\}$ is the closed ball of radius $r$ around ${\bf 0}$. In words, the $k$-th successive minimum of a lattice is the minimal radius of a ball centered around ${\bf 0}$ that contains $k$ linearly independent lattice points.
\end{definition}

Thus, \eqref{eq:sum_rate_char} can be restated as saying that IF source coding can achieve any rate greater than
%denoting the $k$th successive %minimum of the lattice by
%$\lambda_k(\svv{F}^T)$, we have
%\begin{align}
%R_{\rm IF}=\frac{1}{2}\log\left(\lambda_K^2(\svv{F}^T)%\right)
%\label{eq:Rif}
%\end{align}
%Thus,
%The sum rate achieved by IF source coding is
\begin{align}
R_{\rm IF}^{\rm opt}(\svv{K}_{\bf x x}) %\triangleq K R_{\rm IF}\nonumber \\
&=\frac{K}{2}\log\left(\lambda_K^2(\svv{F}^T)\right).
\label{eq:RSumif}
\end{align}

Just as successive interference cancellation  significantly improves the achievable rate of  IF  equalization in channel coding (see, e.g., \cite{5757526}), an analogous scheme can be implemented in the case of IF source coding. Specifically, for a given full-rank integer matrix $\svv{A}$,  let $\svv{L}$ be defined
%The matrix $\svv{A}\left(\svv{I}+\svv{K}_{\bf %xx}\right)\svv{A}^T$  admits a
by the Cholesky
decomposition
\begin{align}
    \svv{A}\left(\svv{I}+\svv{K}_{\bf xx}\right)\svv{A}^T=\svv{L}\svv{L}^T
\end{align}
and denote the $m$th element of the diagonal of $\svv{L}$ by $\ell_{m,m}$.
Then, as shown in \cite{UniversalPrecodingforParallelGaussianChannels:Fischler2014} and \cite{he2016integer}, the achievable rate of successive IF source coding (which we denote as IF-SUC) for this choice of $\svv{A}$ is given by
\begin{align}
    R_{\rm IF-SUC}(\svv{K}_{\bf x x};\svv{A})=K \cdot \max_{k=1, \ldots, K} R_{{\rm IF-SUC},k}(\svv{K}_{\bf x x};\svv{A}),
\end{align}

where
\begin{align}
    R_{{\rm IF-SUC},k}(\svv{K}_{\bf x x};\svv{A})=\frac{1}{2}\log \left( \ell_{k,k}^2 \right).
    \label{eq:R_IFSUC_m}
\end{align}
We refer the reader to pages 36-37 in \cite{UniversalPrecodingforParallelGaussianChannels:Fischler2014} for details.
Finally, by optimizing over the choice of $\svv{A}$, we obtain
\begin{align}
R_{{\rm IF-SUC}}^{\rm opt}(\svv{K}_{\bf x x}) &\triangleq R_{{\rm IF-SUC}}(\svv{K}_{\bf x x};\svv{A}^{\rm opt}_{\rm SUC}) \nonumber \\
&=  \min_{\substack{\svv{A}\in\mathbb{Z}^{K\times K} \\ \det{\svv{A}}\neq0}}
R_{\rm IF-SUC}(\svv{K}_{\bf x x};\svv{A}).
\end{align}

While the gap between $R_{\rm IF}$ (and even more so $R_{\rm IF-SUC}$) and $R_{\rm BT}$ is quite small for most covariance matrices, it can nevertheless be arbitrarily large. We next quantify the measure of bad covariance matrices by considering randomly-precoded IF source coding.
%\section{Precoded Source Coding}
\section{CRE-Precoded IF Source Coding and its Empirical Performance}
\label{sec:cre}

Recalling (\ref{eq:R_IF(A)}), and with a slight abuse of notation,  the rate of IF source coding for a given precoding matrix $\svv{P}$ is denoted by
\begin{align}
    &R_{\rm IF}^{\rm opt}(\svv{K}_{\bf x x},\svv{P}) \triangleq  R_{\rm IF}^{\rm opt}(\svv{P}\svv{K}_{\bf x x} \svv{P}^T) \nonumber \\
    &=\frac{K}{2}\log\left(\min_{\substack{\svv{A}\in\mathbb{Z}^{K\times K} \\ \det{\svv{A}}\neq0}}\max_{k=1,...,K}{\bf a}_k^T\left(\svv{I}+\svv{P}\svv{K}_{\bf x x}\svv{P}^T\right){\bf a}_k\right).
    \label{eq:R_if_p}
\end{align}

Since $\svv{K_{\bf x x}}$ is symmetric, it allows orthonormal diagonalization
\begin{align}
\svv{I}+\svv{K_{\bf x x}}=\svv{U}\svv{D}\svv{U}^T.
\label{eq:KDconnection}
\end{align}
When unitary precoding is applied, we have
\begin{align}
    \svv{I}+\svv{P}\svv{K_{x x}}\svv{P}^T=\svv{P}\svv{U}\svv{D}\svv{U}^T\svv{P}^T.
\end{align}

To quantify the measure of ``bad'' sources, we consider precoding matrices
that are uniformly (Haar) distributed over the group of orthonormal matrices. Such a matrix ensemble is referred to as CRE and is defined by the unique distribution on orthonormal matrices that is invariant under left and right orthonornal transformations \cite{metha1967random}. That is, given a random matrix $\svv{P}$ drawn from the CRE, for any orthonormal matrix $\grave{\svv{U}}$, both ${\svv{P}}\grave{\svv{U}}$ and $\grave{\svv{U}}{\svv{P}}$ are equal in distribution to ${\svv{P}}$.
Since $\svv{P}\svv{U}^T$ is equal in distribution to ${\svv{P}}$ for CRE precoding, for the sake of computing outage probabilities, we may simply assume that $\svv{U}^T$ (and also $\svv{U}$) is drawn from the CRE.
%It follows that

For a specific choice of integer vector ${\bf a_k}$, we define (again, with a slight abuse of notation)
\begin{align}
 R_{\rm IF}(\svv{D},\svv{U};{\bf a}_k)& \triangleq \frac{1}{2}\log\left({\bf a}_k^T\svv{U}\svv{D}\svv{U}^T {\bf a}_k\right) \nonumber \\
 &=\frac{1}{2}\log\left(\|\svv{D}^{1/2}\svv{U}^T{\bf a}_k\|^2\right),
 \label{eq:specifica}
\end{align}
and correspondingly
\begin{align}
    &R_{\rm IF}^{\rm opt}(\svv{D},\svv{U}) \triangleq \nonumber \\
    &\frac{K}{2}\log\left(\min_{\substack{\svv{A}\in\mathbb{Z}^{K\times K} \\ \det{\svv{A}}\neq0}}\max_{k=1,...,K}%%\log\left(
    \|\svv{D}^{1/2}\svv{U}^T{\bf a}_k\|^2%\right)
    \right).
    \label{eq:PIF_rate1}
\end{align}
Let $\Lambda$ be the lattice spanned by $\svv{G}=\svv{D}^{1/2}\svv{U}^T$.
Then (\ref{eq:PIF_rate1}) may be rewritten as
\begin{align}
R_{\rm IF}^{\rm opt}(\svv{D},\svv{U})=\frac{K}{2}\log\left(\lambda_K^2(\Lambda)\right).
\label{eq:PIF_rate2}
\end{align}

Let us denote the set of all diagonal matrices having the same value of $R_{\rm BT}$, i.e.,
\begin{align}
\mathbb{D}(R_{\rm BT},K) = \{\svv{D}~:~\det\left(\svv{D}\right)=2^{2R_{\rm BT}}\}
.
\label{eq:compoundD}
\end{align}
We may thus rewrite the worst-case outage probability of IF source coding, defined in (\ref{eq:P_WC_OUT}), as
\begin{align}
&P^{\rm WC}_{\rm out,IF}\left(R_{\rm BT},\Delta R\right)\nonumber \\
&=\sup_{\svv{D}\in\mathbb{D}(R_{\rm BT},K)}\prob \left( R_{\rm IF}^{\rm opt}(\svv{D},\svv{U})>R_{\rm BT}+\Delta R \right),
\label{eq:rewirte6withD}
\end{align}
where the probability is with respect to the random selection of  $\svv{U}$ that is drawn from the CRE.

To illustrate the worst-case performance of CRE-precoded IF, we present its  empirical performance for the case of a two-dimensional compound Gaussian source
vector, where the  outage probability (\ref{eq:rewirte6withD}) is computed via Monte-Carlo simulation.

Figure~\ref{fig:x1} depicts the  results
%for CRE-precoded integer-forcing source coding
for different values of $R_{\rm BT}$. Rather than plotting the worst-case outage probability, its complement is depicted, i.e., we plot the probability that the rate of IF falls below $R_{\rm BT}+\Delta R$. %For given $R_{\rm BT}$ and $\Delta R$, integer-forcing  was %calculated over a grid of singular values. For each singular %values, large number of CRE matrices were drawn.
As can be seen from the figure, the WC outage probability (as a function of $\Delta R$) converges to a limiting curve as $R_{\rm BT}$ increases.
\begin{figure}[h]
\begin{center}
\includegraphics[width=0.75\columnwidth]{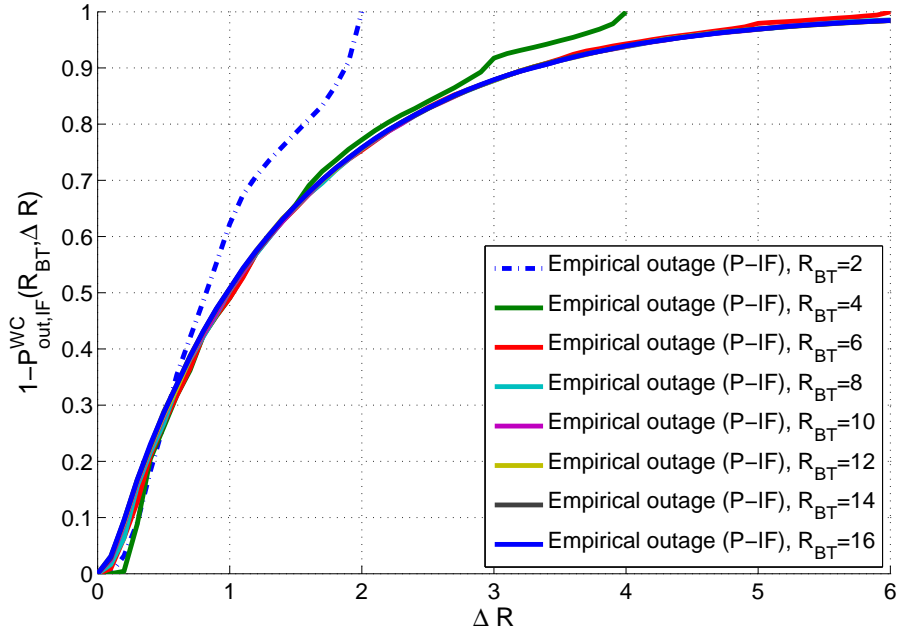}
\end{center}
\caption{Empirical results for (the complement of) the worst-case outage probability of IF source coding when applied to a two-dimensional compound Gaussian source vector as a function of $\Delta R$, for various values of $R_{\rm BT}$.}
\label{fig:x1}
\end{figure}

Figure~\ref{fig:x2} depicts the results for the single (high) rate  $R_{\rm BT}=16$. The required compression rate required to support a given worst-case outage probability constraint is marked, for several outage probabilities. %  which are of interest for $R_{\rm BT}=16$. We note that
We observe that:
\begin{itemize}
\item
For $10\%$ worst-case outage probability, a gap of  $\Delta R=3.292$ bits (or $1.646$ bits per source) is required.
\item
For $5\%$ worst-case outage probability, a gap of $\Delta R=4.293$ bits (or $2.1465$ bits per source) is required.
\item
For $1\%$ worst-case outage probability, a gap of $\Delta R=6.665$ bits (or $3.3325$ bits per source) is required.
\end{itemize}

\begin{figure}[h]
\begin{center}
\includegraphics[width=0.75\columnwidth]{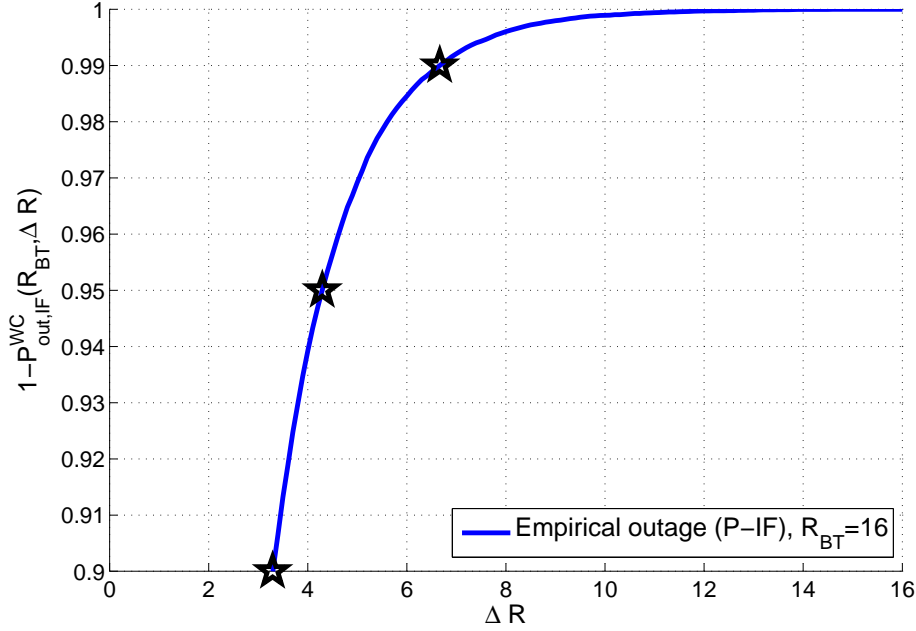}
\end{center}
\caption{Zoom in on the empirical worst-case outage probability for a two-dimensional compound Gaussian source vector with $R_{\rm BT}=16$.}
\label{fig:x2}
\end{figure}

\section{Upper Bounds on the Outage Probability for CRE-Precoded Integer-Forcing Source Coding}
\label{sec:upper}
In this section we develop achievability bounds for randomly-precoded IF source coding. As the derivation is very much along the lines of the results for the analogous problem in channel coding as developed in \cite{OutageBehaviorOfRandomlyPrecoded:Domanovitz2015}, we refer to results from the latter in many points.

The next lemma provides an upper bound on the outage probability of precoded IF source coding as a function of $R_{\rm BT}$ and the rate gap  $\Delta R$
%, and $d_{\max}$
(as well as the number of sources and $d_{\max}$ defined below). Denote
\begin{align}
    \mathbb{A}(\genGam,\delta,K)=\left\{{\bf a}\in \mathbb{Z}^{K}:0<\|{\bf a}\|<\sqrt{\frac{\genGam}{\delta}}\right\}.
    \label{eq:A_beta_d_general}
\end{align}
\begin{lemma}
\label{lem:lem1}
For any $K$ Gaussian sources such that $\svv{D}\in\mathbb{D}(R_{\rm BT},K)$, and for $\svv{U}$ drawn from the CRE, we have
\begin{align}
&\prob\left(R_{\rm IF}^{\rm opt}(\svv{D},\svv{U})<R_{\rm BT}+\Delta R\right) \nonumber \\
&\leq
\sum_{{\bf a}\in\mathbb{A}(\genGam,1/d_{\max},K)}\frac{K\alNt^{\frac{K-1}{2}}2^{-\frac{K-1}{K}(R_{\rm BT}+\Delta R)}}
{\frac{\aNorm^{K-1}}{2^{R_{\rm BT}}}{\sqrt{d_{\max}}}}
\end{align}
% where
% \begin{align}
% \mathbb{A}(\genGam,d_{\min})=\left\{{\bf a}:0<\|{\bf a}\|<\sqrt{\frac{\genGam}{d_{\min}}}\right\},
% \label{eq:mathbbA}
% \end{align}
where $\displaystyle{d_{\max}=\max_{i}\svv{D}_{i,i}}$, the set ${\mathbb{A}(\genGam,1/d_{\max},K)}$ is defined in \eqref{eq:A_beta_d_general},
%\begin{align}
%{\mathbb{A}(\genGam,1/d_{\max})}=\left\{{\bf a}:0<||{\bf %a}||<\sqrt{\genGam d_{\max}}\right\},
%\label{eq:A_beta_d}
%\end{align}
\label{eq:lemma2}
and where $\alNt$ is defined in \eqref{eq:alNtDef} below.
\end{lemma}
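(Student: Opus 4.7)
The plan is to reduce the outage event $\{R_{\rm IF}^{\rm opt}(\svv{D},\svv{U})>R_{\rm BT}+\Delta R\}$ to a statement about the shortest vector in the lattice $\Lambda$ spanned by $\svv{G}=\svv{D}^{1/2}\svv{U}^T$, and then to bound the resulting probability by a union bound over short integer vectors, exploiting the rotational invariance of the CRE. By \eqref{eq:PIF_rate2}, this event is exactly $\{\lambda_K(\Lambda)>r\}$ with $r\triangleq 2^{(R_{\rm BT}+\Delta R)/K}$.

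First, I would trade $\lambda_K$ for $\lambda_1$ via Minkowski's second theorem. Since $|\det(\svv{G})|=\det(\svv{D})^{1/2}=2^{R_{\rm BT}}$, one has $\prod_{k=1}^{K}\lambda_k(\Lambda)\le \alpha(K)\cdot 2^{R_{\rm BT}}$, where $\alpha(K)$ is the dimension-dependent Minkowski constant defined in \eqref{eq:alNtDef}. Combined with the ordering $\lambda_1\le\cdots\le\lambda_K$ this yields $\lambda_1^{K-1}\lambda_K\le\alpha(K)\cdot 2^{R_{\rm BT}}$, so the outage event is contained in $\{\lambda_1<\sqrt{\beta}\}$ for
\begin{align}
\beta \triangleq \bigl(\alpha(K)\cdot 2^{R_{\rm BT}-(R_{\rm BT}+\Delta R)/K}\bigr)^{2/(K-1)}.
\end{align}

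Second, I would apply a union bound,
\begin{align}
\prob\bigl(\lambda_1<\sqrt{\beta}\bigr)\le \sum_{{\bf a}\in\mathbb{Z}^K\setminus\{0\}}\prob\bigl(\|\svv{G}{\bf a}\|^2<\beta\bigr),
\end{align}
and analyze each term. For fixed ${\bf a}\neq 0$, rotational invariance of $\svv{U}$ under the CRE gives
\begin{align}
\|\svv{G}{\bf a}\|^2 = \|{\bf a}\|^2\sum_{i=1}^{K} d_i v_i^2,
\end{align}
where $\svv{v}$ is uniform on the unit sphere $S^{K-1}$. Since $\sum_i d_i v_i^2$ takes values in $[d_{\min},d_{\max}]$, the event $\{\|\svv{G}{\bf a}\|^2<\beta\}$ is either impossible outside a finite range of $\|{\bf a}\|$ or else reduces to a spherical-cap probability; matching this cap-probability estimate against the range over which the event is nondegenerate leads to the summation set $\mathbb{A}(\beta,1/d_{\max},K)$ in the statement.

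Third, for each ${\bf a}$ in this effective range, I would bound the remaining probability $\prob\bigl(\sum_i d_i v_i^2<\beta/\|{\bf a}\|^2\bigr)$ by the normalized surface measure of the corresponding ellipsoidal cap on $S^{K-1}$, following the channel-coding analogue of this computation from \cite{OutageBehaviorOfRandomlyPrecoded:Domanovitz2015}. This yields an estimate of order $K\alpha(K)^{(K-1)/2}\beta^{(K-1)/2}/\bigl(\|{\bf a}\|^{K-1}\sqrt{d_{\max}}\bigr)$; substituting $\beta^{(K-1)/2}=\alpha(K)\cdot 2^{R_{\rm BT}}/r$ from the definition of $\beta$ and collapsing the rate-dependent factors produces exactly the summand $K\alpha(K)^{(K-1)/2} 2^{-\frac{K-1}{K}(R_{\rm BT}+\Delta R)}/\bigl(\tfrac{\|{\bf a}\|^{K-1}}{2^{R_{\rm BT}}}\sqrt{d_{\max}}\bigr)$ appearing in the lemma. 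The main obstacle will be this spherical-cap estimate: getting tight constants and the correct dependence on $d_{\max}$ so that the summation range aligns with $\mathbb{A}(\beta,1/d_{\max},K)$; the Minkowski step and the union bound are essentially bookkeeping once that estimate is in hand.
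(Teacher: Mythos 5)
Your overall architecture (reduce the outage event to $\{\lambda_1<\sqrt{\genGam}\}$, union-bound over integer vectors, estimate each term as a spherical-cap probability using the rotational invariance of the CRE) is the right one, but the reduction step is carried out on the wrong lattice, and this breaks the rest of the argument. The paper does \emph{not} use Minkowski's second theorem here; it passes to the \emph{dual} lattice $\Lambda^*$ spanned by $\svv{D}^{-1/2}\svv{U}^T$ and invokes the transference theorem of Lagarias--Lenstra--Schnorr, $\lambda_1(\Lambda^*)^2\lambda_K(\Lambda)^2\le\frac{K+3}{4}\bar\gamma_K^2$. That transference constant is exactly what $\alNt$ in \eqref{eq:alNtDef} is; it is not a Minkowski constant (Minkowski's second theorem, the paper's Theorem~\ref{thm:Minkowski}, carries $K^{K}$ instead). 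The dual route gives the threshold $\genGam=\alNt\,2^{-\frac{2}{K}(R_{\rm BT}+\Delta R)}$ directly, so that $\sqrt{\genGam}^{\,K-1}=\alNt^{\frac{K-1}{2}}2^{-\frac{K-1}{K}(R_{\rm BT}+\Delta R)}$ is literally the numerator of the stated summand, and the dual geometry ($\|\svv{D}^{-1/2}\svv{U}^T{\bf a}\|\ge\|{\bf a}\|/\sqrt{d_{\max}}$, $\det(\svv{D}^{-1/2})=2^{-R_{\rm BT}}$) is what produces both the restriction to $\mathbb{A}(\genGam,1/d_{\max},K)$ and the factor $\aNorm^{K-1}\sqrt{d_{\max}}/2^{R_{\rm BT}}$ in the denominator.

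Your primal-lattice version does not reproduce any of this. With $\svv{G}=\svv{D}^{1/2}\svv{U}^T$ one has $\|\svv{G}{\bf a}\|^2\ge d_{\min}\|{\bf a}\|^2$, so the nondegeneracy condition is $\|{\bf a}\|<\sqrt{\beta/d_{\min}}$, i.e.\ the set $\mathbb{A}(\beta,d_{\min},K)$, not $\mathbb{A}(\beta,1/d_{\max},K)$; the surface-area lower bound for the ellipsoid with semi-axes $\|{\bf a}\|\sqrt{d_i}$ yields $\sqrt{d_{\min}}$ in the numerator and $2^{R_{\rm BT}}$ in the denominator (this is precisely the structure of the paper's Lemma~\ref{lem:lem2}, which \emph{is} a primal-lattice argument, used there for IF-SUC with $K=2$). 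Moreover your claimed substitution fails algebraically: your $\beta^{(K-1)/2}=\alNt\,2^{R_{\rm BT}-\frac{1}{K}(R_{\rm BT}+\Delta R)}$, which is not $\alNt^{\frac{K-1}{2}}2^{-\frac{K-1}{K}(R_{\rm BT}+\Delta R)}$, and since this $\beta$ \emph{grows} with $R_{\rm BT}$ (as $2^{2R_{\rm BT}/(K-1)}$ up to rate-gap corrections), the number of integer vectors in your summation range blows up with the rate, so it is not even clear your route yields a bound of the required form uniformly over $\mathbb{D}(R_{\rm BT},K)$. To close the gap, replace the Minkowski step by the dual-lattice transference bound and redo the cap estimate for $\Lambda^*$.
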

\begin{proof}
Let $\Lambda^*$ denote the dual lattice of $\Lambda$ and note that it is spanned by the matrix
\begin{align}
(\svv{G}^T)^{-1}=\svv{D}^{-1/2}\svv{U}^T.
\label{eq:dualLattice}
\end{align}
The successive minima of $\Lambda$ and $\Lambda^{*}$ are related by (Theorem 2.4 in \cite{KorkinZolotarev:Lagarias1990})
\begin{align}
\lambda_1(\Lambda^{*})^2\lambda_{K}(\Lambda)^2\leq\frac{K+3}{4}{\bar{\gamma}_{K}}^2,
\label{eq:Lagarias}
\end{align}
where
\begin{align}
\bar{\gamma}_{K} = \max \{ \gamma_i : 1 \leq i \leq K \}
\label{eq:bargamma}
\end{align}
with $\gamma_{K}$ denoting Hermite's constant.

The tightest known bound for Hermite's constant, as derived in \cite{minimumValueOfQuadraticForms:Blichfeldt1929}, is
 \begin{align}
\gamma_{K}\leq\left(\frac{2}{\pi}\right)\Gamma\left(2+\frac{K}{2}\right)^{2/K}.
\label{eq:blich}
\end{align}
Since this is an increasing function of $K$, it follows that $\bar{\gamma}_{K}$ is smaller than the r.h.s. of (\ref{eq:blich}).
%\footnote{In the sequel we use the known values of Hermite's constant when possible, i.e. for $K=2,3,4$. For other dimensions, we use this bound.}
Combining the latter with the exact values of
the Hermite constant for dimensions for which
it is known, we define
% \begin{align}
% R_{\rm IF}(\svv{D},\svv{V})\geq\Nt\log\left(\frac{\lambda_{1}^2(\Lambda^*)}{\alNt}\right),
% \end{align}
% where
\begin{align}
\alNt =\begin{cases}
\frac{K+3}{4}\gamma_{K}^2,& K=1-8,24 \\
\frac{K+3}{4}\frac{2}{\pi}\Gamma\left(2+\frac{K}{2}\right)^{2/K},& {\rm otherwise}
\end{cases}.
\label{eq:alNtDef}
\end{align}

Therefore, we may bound the achievable rates of IF via the dual lattice as follows
\begin{align}
R_{\rm IF}^{\rm opt}(\svv{D},\svv{U})&\leq\frac{K}{2}\log\left({\alNt}\frac{1}{\lambda_1(\Lambda^{*})^2}\right).
\end{align}
Hence, we have
\begin{align}
    &\prob\left(R_{\rm IF}^{\rm opt}(\svv{D},\svv{U})>R_{\rm BT}+\Delta R\right) \nonumber \\
    & \leq \prob\left(\frac{K}{2}\log\left(\alNt\frac{1}{\lambda_1(\Lambda^{*})^2}\right)>R_{\rm BT}+\Delta R\right)  \nonumber \\
     %&\prob\left(\frac{K+3}{4}{\gamma^{\Delta}_{K}}^2\frac{1}{\lambda_1(\Lambda^{\Delta})^2}>2^{\frac{2}{K}(R_{\rm BT}+\Delta R)}\right) =  \nonumber \\
     & = \prob\left(\lambda_1(\Lambda^{*})^2<\alNt 2^{-\frac{2}{K}(R_{\rm BT}+\Delta R)}\right).
     \label{eq:PrLambda1Implicit-Comp}
\end{align}
Denote
\begin{align}
\genGam=\alNt 2^{-\frac{2}{K}(R_{\rm BT}+\Delta R)}.
\label{eq:beta}
\end{align}
We wish to bound~(\ref{eq:PrLambda1Implicit-Comp}), or equivalently, we wish to bound
\begin{align}
\prob\left(\lambda_1^2(\Lambda^{*})<{\genGam}\right)=\prob\left(\lambda_1(\Lambda^{*})<\sqrt{\genGam}\right)
\end{align}
for a given matrix $\svv{D} \in \mathbb{D}(R_{\rm BT},K)$.
Note that the event $\lambda_1(\Lambda^{*})<\sqrt{\genGam}$ is equivalent to the event
\begin{align}
\bigcup_{{\bf a}\in{\mathbb{Z}^{K}\setminus\{{\bf 0}\}}}||\svv{D}^{-1/2}\svv{U}^T\bf{a}||<\sqrt{\genGam}.
\end{align}
Applying the union bound yields
\begin{align}
\prob\left(\lambda_1(\Lambda^{*}) < \sqrt{\genGam}\right) \leq \sum_{{\bf a}\in\mathbb{Z}^{K}\setminus\{{\bf 0}\}}\prob\left(||\svv{D}^{-1/2}\svv{U}^T{\bf a}||<\sqrt{\genGam}\right).
\label{eq:outageProbReal2x2}
\end{align}
Note that whenever $\frac{||{\bf a}||}{\sqrt{d_{\max}}}
\geq \sqrt{\genGam}$, we have
\begin{align}
    \prob\left(||\svv{D}^{-1/2}\svv{U}^T{\bf a}||<\sqrt{\genGam}\right)=0.
    \label{eq:173}
\end{align}
Therefore, substituting $1/d_{\max}$ in (\ref{eq:A_beta_d_general}), the set of relevant vectors ${\bf a}$ is
\begin{align}
{\mathbb{A}(\genGam,1/d_{\max},K)}=\left\{{\bf a}\in \mathbb{Z}^K:0<||{\bf a}||<\sqrt{\genGam d_{\max}}\right\}.
\end{align}
It follows from (\ref{eq:outageProbReal2x2}) and (\ref{eq:173}) that
{\small
\begin{align}
\prob\left(\lambda_{1}(\Lambda^{*})<\sqrt{\genGam}\right)\leq \sum_{{\bf a}\in{\mathbb{A}(\genGam,1/d_{\max},K)}}\prob\left(\|\svv{D}^{-1/2}\svv{U}^T{\bf a}\|<\sqrt{\genGam}\right).
\label{eq:outageProbReal}
\end{align}
}%

The rest of the proof follows the footsteps of Lemma~2 in \cite{OutageBehaviorOfRandomlyPrecoded:Domanovitz2015} and is given in Appendix~{\ref{app:proofOfLemma1}}.
\end{proof}

While Lemma~\ref{lem:lem1} provides an explicit bound on the outage probability, in order to calculate it, one needs to go over all diagonal matrices in $\mathbb{D}(R_{\rm BT},K)$ and for each such diagonal matrix, sum over all the relevant integer vectors in $\mathbb{A}(\genGam,1/d_{\max},K)$.
Hence, the bound can be evaluated only for moderate compression rates and for a small number of sources. The following theorem, that may be viewed as the counterpart of Theorem~1 in \cite{OutageBehaviorOfRandomlyPrecoded:Domanovitz2015}, provides a looser, yet very simple
closed-form bound. Another advantage for this bound is that it does not depend on the Berger-Tung achievable rate.
%Furthermore, this bound does not depend on the rate but rather only on the number of sources and the rate gap.
\begin{theorem}
For any $K$ sources such that $\svv{D}\in\mathbb{D}(R_{\rm BT},K)$, and for $\svv{U}$ drawn from the CRE we have
\begin{align}
\prob\left(R_{\rm IF}^{\rm opt}(\svv{D},\svv{U})>R_{\rm BT}+\Delta R\right)
\leq
c(K)2^{-\Delta R},
\label{eq:thm1}
\end{align}
where
\begin{align}
    c(K)=K\alNt^{\frac{K}{2}}(K+c_{\max})\frac{\pi^{K/2}}{\Gamma(K/2+1)},
\end{align}
\begin{align}
\alNt =
\frac{K+3}{4}\frac{2}{\pi}\Gamma\left(2+\frac{K}{2}\right)^{2/K}
\end{align}
and
\begin{align}
    c_{\max}=
    \begin{cases}
    \left(2+\frac{\sqrt{K}}{2}\right)^{K}-\left(1-\frac{\sqrt{K}}{2} \right)^{K} & 1\leq K < 4 \\
    \left(1+\sqrt{K}\right)^{K}  & K \geq 4
    \end{cases}.
    \label{eq:ckdef}
\end{align}
Note that $c(K)$ is a constant that depends only on the number of sources $K$.
\label{thm:thm1}
\end{theorem}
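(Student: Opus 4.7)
The plan is to derive Theorem~\ref{thm:thm1} by upper-bounding the source-dependent summation of Lemma~\ref{lem:lem1} by a quantity depending only on $K$ and the excess rate~$\Delta R$. Rewriting the bound of Lemma~\ref{lem:lem1} as
$$
\prob\!\left(R_{\rm IF}^{\rm opt}(\svv{D},\svv{U}) > R_{\rm BT}+\Delta R\right)
\leq K\,\alpha(K)^{(K-1)/2}\, 2^{R_{\rm BT}/K - \frac{K-1}{K}\Delta R}\, d_{\max}^{-1/2}\; \mathcal{S}(R),
$$
with $\mathcal{S}(R)=\sum_{{\bf a}\in\mathbb{Z}^{K},\,0<\|{\bf a}\|<R}\|{\bf a}\|^{-(K-1)}$ and $R=\sqrt{\genGam d_{\max}}=\sqrt{\alpha(K) d_{\max}}\cdot 2^{-(R_{\rm BT}+\Delta R)/K}$, the target scaling $2^{-\Delta R}$ will emerge by exact cancellation provided I can prove an estimate of the form
$$
\mathcal{S}(R)\;\leq\; (K+c_{\max})\,\frac{\pi^{K/2}}{\Gamma(K/2+1)}\, R .
$$
Substituting this back and collecting exponents, the factors $2^{R_{\rm BT}/K}$ and $d_{\max}^{-1/2}$ cancel against the corresponding factors in $R$, leaving exactly $c(K)\,2^{-\Delta R}$.

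The core technical step is thus to bound $\mathcal{S}(R)$ linearly in $R$. I would combine a Voronoi-cell volume comparison with Abel summation against the lattice point-count $N(t)=\#\{{\bf a}\in\mathbb{Z}^{K}:\|{\bf a}\|\leq t\}$. Attaching to each lattice point the unit hypercube ${\bf a}+[-1/2,1/2]^{K}$, all cells with $\|{\bf a}\|\leq t$ are contained in the ball of radius $t+\sqrt{K}/2$, so
$$
N(t)\leq \frac{\pi^{K/2}}{\Gamma(K/2+1)}\,(t+\sqrt{K}/2)^{K}.
$$
Writing $\mathcal{S}(R)$ as a Stieltjes integral of $t^{-(K-1)}$ against $dN(t)$ and integrating by parts produces
$$
\mathcal{S}(R)=\frac{N(R)}{R^{K-1}}+(K-1)\int_{1}^{R}\frac{N(t)}{t^{K}}\,dt .
$$
The continuous replacement $\int_0^R r^{-(K-1)}\cdot \bigl(K\pi^{K/2}/\Gamma(K/2+1)\bigr)\cdot r^{K-1}\,dr$ is exactly linear in $R$, and the Voronoi bound above upgrades this heuristic into a rigorous inequality, at the cost of additive terms arising from the $\sqrt{K}/2$ offset.

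The main obstacle, and the reason for the piecewise definition of $c_{\max}$, is controlling the contribution of short lattice vectors, for which $\sqrt{K}/2$ is comparable to $\|{\bf a}\|$ itself. For $K\geq 4$ one has $\sqrt{K}/2\geq 1$, and the innermost shells can be absorbed into the crude count of integer points in the ball of radius $1+\sqrt{K}$, giving $c_{\max}=(1+\sqrt{K})^{K}$. For $K<4$ this bound is loose, and the small integer vectors must be enumerated directly, producing the explicit difference-of-powers form $(2+\sqrt{K}/2)^{K}-(1-\sqrt{K}/2)^{K}$. Once both regimes of the lattice-sum inequality are established, substituting back into the first-paragraph bound, using $\genGam=\alpha(K)\,2^{-2(R_{\rm BT}+\Delta R)/K}$, and collecting all factors immediately yields the closed-form bound $c(K)\,2^{-\Delta R}$ claimed by the theorem.
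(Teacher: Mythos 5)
Your proposal follows essentially the same route as the paper's Appendix B: starting from Lemma~\ref{lem:lem1}, reducing the problem to a linear-in-$R$ bound on $\sum_{0<\|\mathbf{a}\|<R}\|\mathbf{a}\|^{-(K-1)}$ via the cube-offset integer-point count, and obtaining the $2^{-\Delta R}$ scaling through the same exact cancellation of the $2^{R_{\rm BT}}$ and $d_{\max}$ factors. The only difference is bookkeeping: you use Abel summation against $N(t)$ where the paper sums shell by shell with per-shell constants $c_1,c_2$ imported from the channel-coding counterpart, and both routes produce the same factor $K+c_{\max}$ with the same piecewise split at $K=4$ driven by the short lattice vectors.
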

\begin{proof}
%See Appendix~A in \cite{FullPaper:Domanovitz2017}.
See Appendix~\ref{app:proofOfThm1}.
\end{proof}

% \begin{figure}
% \begin{center}
% \includegraphics[width=\columnwidth]{figures/emp_noSIC_Lemma1_Thm1_noMultCoef.eps}
% \end{center}
% \caption{Comparison of empirical results, Lemma~\ref{lem:lem2} and Theorem~\ref{thm:thm1} for $2$ sources, for various values of $R_{\BT}$.}
% \label{fig:xx1}
% \end{figure}
% Figure~\ref{fig:xx1} depicts the bounds derived as well as results of a Monte Carlo evaluation of (\ref{eq:P_WC_OUT}) for the case of $2$ sources and CRE precoding. Rather than plotting the outage probability, its complement is depicted, i.e., we depict the cumulative distribution function of the event that the rate is achieved. Lemma 1 were calculated by going over a grid of values of $d_1$ and $d_2$.

Similarly to the case of IF channel coding
(cf., Section~IV-C in \cite{OutageBehaviorOfRandomlyPrecoded:Domanovitz2015}),
analyzing Theorem~{\ref{thm:thm1}} reveals that there are two main sources for looseness that may be further tightened:
\begin{itemize}
\item
    \emph{Union bound} - While there is an inherent loss in the union bound, in fact, some terms in the summation (\ref{eq:outageProbReal}) may be completely dropped.\footnote{Similar to
    the derivation in Section~IV-C in
    %\cite{OutageBehaviorOfRandomlyPrecoded:Domanovitz2015},
    \cite{OutageBehaviorOfRandomlyPrecoded:Domanovitz2015},
    a simple factor of $2$ can be deduced (regardless of the rate and number of sources) by noting that $\bf{a}$ and $-\bf{a}$ result in the same outcome and  hence there is no need to account for both cases.\label{foot:3}} Specifically, using  Corollary~1 in
    %\cite{OutageBehaviorOfRandomlyPrecoded:Domanovitz2015}
    \cite{OutageBehaviorOfRandomlyPrecoded:Domanovitz2015},
    the set $\mathbb{A}(\genGam,1/d_{\max},K)$ appearing in the summation in (\ref{eq:lemma2}) may be replaced by the smaller set $\mathbb{B}(\genGam,1/d_{\max},K)$ where
\end{itemize}
\vspace*{-\baselineskip}
{\small
\begin{align}
&\mathbb{B}(\genGam,d,K)= \nonumber \\
&\left\{{\bf a}\in \mathbb{Z}^K:0<\|{\bf a}\|<\sqrt{\frac{\genGam}{d}} \:  {\rm and} \: \nexists 0<c<1 \: {\rm s.t.} \: c{\bf a}\in\mathbb{Z}^K \right\}.
\end{align}
}%
\vspace*{-\baselineskip}
\begin{itemize}
\item
    \emph{Dual Lattice} - Bounding via the dual lattice induces a loss reflected in (\ref{eq:Lagarias}). This may be circumvented for the case of a two-dimensional  source vector by using IF-SUC, as accomplished in Lemma~\ref{lem:lem2} and Theorem~\ref{thm:thm2} which we present next.
\end{itemize}

\begin{lemma}
\label{lemma2}
For a two-dimensional Gaussian  source vector such that $\svv{D}\in\mathbb{D}(R_{\rm BT},K)$, and for $\svv{U}$ drawn from the CRE, we have
\begin{align}
&\prob\left(R_{\rm IF-SUC}^{\rm opt}(\svv{D},\svv{U})>R_{\rm BT}+\Delta R\right) \nonumber \\
&\leq
\sum_{{\bf a}\in\mathbb{B}(\genGam,d_{\min},K)}\frac{2\sqrt{\genGam}}{\|{\bf a}\|2^{R_{\rm BT}}\frac{1}{\sqrt{d_{\min}}}}
\end{align}
% where
% \begin{align}
% \mathbb{A}(\genGam,d_{\min})=\left\{{\bf a}:0<\|{\bf a}\|<\sqrt{\frac{\genGam}{d_{\min}}}\right\},
% \label{eq:mathbbA}
% \end{align}
where $\displaystyle{d_{\min}=\min_{i}\svv{D}_{i,i}}$ and
${\mathbb{A}(\genGam,d_{\min},K)}$ is defined in \eqref{eq:A_beta_d_general},
%and
%\begin{align}
%{\mathbb{A}(\genGam,d_{\min})}=\left\{{\bf a}:0<||{\bf %a}||<\sqrt{\frac{\genGam}{d_{\min}}}\right\}.
%%\label{eq:beta}
%\end{align}
\label{lem:lem2}
\end{lemma}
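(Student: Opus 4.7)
The plan is to sidestep the Hermite-type factor $\alNt$ that was incurred in Lemma~\ref{lem:lem1} by working directly with the primal lattice $\Lambda$ spanned by $\svv{G}=\svv{D}^{1/2}\svv{U}^T$, rather than with its dual. The key enabler for $K=2$ is that we can take $\svv{A}$ unimodular with its first row a shortest vector of $\Lambda(\svv{F}^T)$, so that $\ell_{1,1}^2=\lambda_1^2(\Lambda)$, and unimodularity forces $\ell_{1,1}^2\ell_{2,2}^2=\det(\svv{I}+\svv{K}_{\bf xx})=2^{2R_{\rm BT}}$. For this (optimal, by an AM-GM argument ruling out $|\det\svv{A}|\geq 2$) choice one obtains
\[
R_{\rm IF\text{-}SUC}^{\rm opt}(\svv{D},\svv{U})=\max\bigl(\log\lambda_1^2(\Lambda),\;2R_{\rm BT}-\log\lambda_1^2(\Lambda)\bigr),
\]
so the outage event is contained in $\{\lambda_1^2(\Lambda)<2^{R_{\rm BT}-\Delta R}\}\cup\{\lambda_1^2(\Lambda)>2^{R_{\rm BT}+\Delta R}\}$. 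Hermite's bound for $K=2$ together with $\det\Lambda=\det\svv{D}^{1/2}=2^{R_{\rm BT}}$ yields $\lambda_1^2(\Lambda)\leq\gamma_2\cdot 2^{R_{\rm BT}}=(2/\sqrt{3})\,2^{R_{\rm BT}}$, which makes the second event vacuous whenever $\Delta R\geq\log(2/\sqrt{3})$ (the regime where the stated bound is non-trivial). Setting $\genGam=2^{R_{\rm BT}-\Delta R}$, it remains to bound $\prob(\lambda_1^2(\Lambda)<\genGam)$.

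Next, I would expand this event as $\bigcup_{\svv{a}\in\mathbb{Z}^2\setminus\{\mathbf{0}\}}\{\|\svv{D}^{1/2}\svv{U}^T\svv{a}\|^2<\genGam\}$ and apply the union bound. The estimate $\|\svv{D}^{1/2}\svv{U}^T\svv{a}\|\geq\sqrt{d_{\min}}\|\svv{a}\|$ restricts the union to $\|\svv{a}\|<\sqrt{\genGam/d_{\min}}$, and Corollary~1 of \cite{OutageBehaviorOfRandomlyPrecoded:Domanovitz2015} further prunes it to one primitive representative per direction, i.e., to $\mathbb{B}(\genGam,d_{\min},K)$.

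For each surviving $\svv{a}$, the Haar property of $\svv{U}$ makes $\svv{U}^T\svv{a}=\|\svv{a}\|(\cos\theta,\sin\theta)$ with $\theta$ uniform on $[0,2\pi]$, so
\begin{align*}
\prob\bigl(\|\svv{D}^{1/2}\svv{U}^T\svv{a}\|^2<\genGam\bigr)
&=\prob\bigl(d_{\min}+(d_{\max}-d_{\min})\sin^2\theta<\genGam/\|\svv{a}\|^2\bigr) \\
&=\tfrac{2}{\pi}\arcsin\sqrt{\tfrac{\genGam/\|\svv{a}\|^2-d_{\min}}{d_{\max}-d_{\min}}}
\end{align*}
in the non-trivial regime. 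Combining the elementary inequalities $\arcsin x\leq\pi x$ on $[0,1]$ and $(\tau-d_{\min})/(d_{\max}-d_{\min})\leq\tau/d_{\max}$ for $\tau\leq d_{\max}$, and using the identity $d_{\min}d_{\max}=\det\svv{D}=2^{2R_{\rm BT}}$, yields the per-term bound $2\sqrt{\genGam d_{\min}}/(\|\svv{a}\|\cdot 2^{R_{\rm BT}})$ appearing in the lemma. Summing over $\svv{a}\in\mathbb{B}(\genGam,d_{\min},K)$ completes the argument.

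The main obstacle is the correct handling of the two outage sub-events: the $\lambda_1^2$-too-large branch must be dismissed via Hermite's constant for $K=2$ (which only works once $\Delta R$ exceeds $\log(2/\sqrt{3})$, below which the stated bound is vacuous anyway), while the $\lambda_1^2$-too-small branch requires the direct primal-lattice analysis outlined above. The remaining work is essentially algebraic and closely parallels the proof of Lemma~2 in \cite{OutageBehaviorOfRandomlyPrecoded:Domanovitz2015}, whose reduction of the index set to $\mathbb{B}$ and whose Haar-sphere calculation can be invoked almost verbatim for the two-dimensional case at hand.
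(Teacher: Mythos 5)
Your proposal is correct and its skeleton is the same as the paper's: reduce the IF-SUC outage event to $\{\lambda_1^2(\Lambda(\svv{F}^T))<2^{R_{\rm BT}-\Delta R}\}$ by exploiting the determinant identity $\ell_{1,1}^2\ell_{2,2}^2=\det(\svv{I}+\svv{K}_{\bf xx})$ for a unimodular $\svv{A}$, then apply a union bound over integer vectors restricted by $d_{\min}$ and a per-vector Haar probability estimate. The differences are in how the individual steps are discharged, and they are worth noting. First, where the paper cites Theorem~3 of \cite{OrdentlichErezNazer:2013} for unimodularity of the optimal $\svv{A}$ and then uses Minkowski's product bound (Lemma~\ref{lem:appLem2}) to dismiss the ``$\lambda_1$ too large'' branch --- which forces the restriction $\Delta R>1$ --- you instead exhibit one explicit unimodular $\svv{A}$ whose first row is a shortest vector (such an extension exists because a shortest vector is primitive) and kill the large branch with Hermite's constant $\gamma_2=2/\sqrt{3}$, which only requires $\Delta R\geq\log(2/\sqrt{3})\approx 0.21$; this is both more self-contained and valid on a wider range of $\Delta R$. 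Note that you only need the inequality $R_{\rm IF\text{-}SUC}^{\rm opt}\leq\max\bigl(\log\lambda_1^2,\,2R_{\rm BT}-\log\lambda_1^2\bigr)$, which follows immediately from your single choice of $\svv{A}$; the claimed equality and the AM-GM optimality argument are superfluous and not fully justified, so I would drop them. Second, for the per-vector probability the paper invokes the ellipsoid-cap surface-area lemma of \cite{OutageBehaviorOfRandomlyPrecoded:Domanovitz2015}, whereas your explicit two-dimensional computation $\frac{2}{\pi}\arcsin\sqrt{\tfrac{\genGam/\|{\bf a}\|^2-d_{\min}}{d_{\max}-d_{\min}}}$ together with $\arcsin x\leq\pi x$ and $d_{\min}d_{\max}=2^{2R_{\rm BT}}$ reproduces exactly the stated per-term bound $2\sqrt{\genGam d_{\min}}/(\|{\bf a}\|2^{R_{\rm BT}})$ (and the sharper $\arcsin x\leq(\pi/2)x$ would even gain a factor of $2$); just handle the degenerate case $d_{\max}=d_{\min}$ separately.
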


\begin{proof}
See Appendix~\ref{app:proofOfThm2}.
\end{proof}

\begin{theorem}
For a two-dimensional Gaussian  source vector such that $\svv{D}\in\mathbb{D}(R_{\rm BT},K)$, and for $\svv{U}$ drawn from the CRE, we have
\begin{align}
\prob\left(R_{\rm IF-SUC}^{\rm opt}(\svv{D},\svv{U})>R_{\rm BT}+\Delta R\right) \leq
c'(K)2^{-\Delta R},
% \label{eq:thm2}
\end{align}
where
\begin{align}
    c'(K)=
    2\pi\left(5+3\sqrt{2}
    \right).
\end{align}
\label{thm:thm2}
\end{theorem}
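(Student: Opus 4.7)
The plan is to combine the union-bound of Lemma~\ref{lem:lem2} with an explicit uniform bound on a lattice sum over $\mathbb{Z}^2$, and then let the $d_{\min}$- and $d_{\max}$-dependent factors cancel.

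First I would extract the $\svv{D}$-dependence from the summand in Lemma~\ref{lem:lem2}. Since $\svv{D}\in \mathbb{D}(R_{\rm BT},2)$ forces $d_{\min}d_{\max}=2^{2R_{\rm BT}}$, the per-vector bound rewrites as $\frac{2\sqrt{\genGam d_{\min}}}{\|{\bf a}\|\,2^{R_{\rm BT}}}=\frac{2\sqrt{\genGam}}{\|{\bf a}\|\sqrt{d_{\max}}}$. Introducing the effective radius $R_0 \triangleq \sqrt{\genGam/d_{\min}}$, so that $\mathbb{B}(\genGam,d_{\min},2)$ is the set of primitive ${\bf a}\in\mathbb{Z}^2$ with $0<\|{\bf a}\|<R_0$, Lemma~\ref{lem:lem2} becomes
\begin{align*}
\prob\!\left(R_{\rm IF-SUC}^{\rm opt}(\svv{D},\svv{U})>R_{\rm BT}+\Delta R\right)
\leq \frac{2\sqrt{\genGam}}{\sqrt{d_{\max}}}\sum_{{\bf a}\in\mathbb{B}}\frac{1}{\|{\bf a}\|}.
\end{align*}
The IF-SUC analysis for $K=2$ identifies $\genGam=2^{R_{\rm BT}-\Delta R}$ as the relevant outage threshold (the event that $\lambda_1^2$ of the primal lattice $\Lambda(\svv{F}^T)$ falls below $2^{R_{\rm BT}-\Delta R}$), and with this identification everything is controlled once the lattice sum is bounded linearly in $R_0$.

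The main technical step is a uniform bound of the form
\begin{align*}
\sum_{{\bf a}\in\mathbb{Z}^2\setminus\{{\bf 0}\},\ \|{\bf a}\|< R_0}\frac{1}{\|{\bf a}\|} \leq \pi(5+3\sqrt{2})\,R_0 \qquad \text{for every }R_0>0.
\end{align*}
Restricting to primitive vectors can only decrease the sum, so we may safely drop primitivity. I would prove this by shelling: bound the number of lattice points of norm at most $r$ via the Minkowski-type estimate $|\{{\bf a}\in\mathbb{Z}^2:\|{\bf a}\|\leq r\}|\leq\pi(r+\tfrac{\sqrt{2}}{2})^2$ (each lattice point accounts for its unit Voronoi square, and the union of these squares fits inside a disk of radius $r+\tfrac{\sqrt{2}}{2}$), and then apply Abel summation to convert it into a bound on $\sum 1/\|{\bf a}\|$. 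The asymptotic leading term of the resulting estimate is $2\pi R_0$; the larger constant $\pi(5+3\sqrt{2})$ absorbs a $(r+\tfrac{\sqrt{2}}{2})^2$-type boundary correction together with the disproportionate contribution of the very shortest vectors (the four of norm $1$, the four of norm $\sqrt{2}$, etc.). Verifying that $\pi(5+3\sqrt{2})$ is a valid linear envelope reduces to comparing the piecewise-constant sum (which jumps at norms $1,\sqrt{2},\sqrt{5},\ldots$) against the line $\pi(5+3\sqrt{2})R_0$, a finite calculation.

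Assembling the pieces, the factor $R_0$ cancels against $\sqrt{\genGam/d_{\min}}$, and we obtain
\begin{align*}
\prob\!\left(R_{\rm IF-SUC}^{\rm opt}>R_{\rm BT}+\Delta R\right)
\leq \frac{2\pi(5+3\sqrt{2})\,\genGam}{\sqrt{d_{\min}d_{\max}}}
= \frac{2\pi(5+3\sqrt{2})\,\genGam}{2^{R_{\rm BT}}}
= 2\pi(5+3\sqrt{2})\,2^{-\Delta R},
\end{align*}
which is the claim with $c'(K)=2\pi(5+3\sqrt{2})$. The main obstacle I anticipate is the explicit linear bound on the lattice sum with constant $\pi(5+3\sqrt{2})$: the loose global estimate $\sum 1/\|{\bf a}\|\leq 2\pi R_0$ would only give a constant of $4\pi$, so careful bookkeeping of the small-norm contributions---where $1/\|{\bf a}\|$ is largest---is needed to pin down the precise constant stated in the theorem. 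Note that the bound is informative only once $2^{-\Delta R}\leq 1/c'(K)$, i.e., $\Delta R\gtrsim 6$ bits, which is well above the Hermite threshold $\log_2(2/\sqrt{3})$ and so the ``upper tail'' $\lambda_1^2>2^{R_{\rm BT}+\Delta R}$ is deterministically ruled out and needs no separate treatment.
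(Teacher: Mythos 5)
Your proposal is correct and follows essentially the same route as the paper: starting from Lemma~\ref{lem:lem2}, identifying $\genGam=2^{R_{\rm BT}-\Delta R}$ as the primal-lattice threshold, and bounding the lattice sum $\sum 1/\|{\bf a}\|$ linearly in the cutoff radius via a shell-counting (Minkowski/Voronoi-cell) estimate whose constant $K+c_2=2+\bigl(2+\tfrac{\sqrt{2}}{2}\bigr)^2-\bigl(1-\tfrac{\sqrt{2}}{2}\bigr)^2=5+3\sqrt{2}$ is exactly the one the paper obtains, after which the $d_{\min}$ and $d_{\max}$ factors cancel through $d_{\min}d_{\max}=2^{2R_{\rm BT}}$. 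Your Abel-summation packaging of the shelling step is only a cosmetic variant of the paper's term-by-term bound with $c_1,c_2$.
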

\begin{proof}
%See Appendix~B in \cite{FullPaper:Domanovitz2017}.
See Appendix~\ref{app:proofOfThm2}.
\end{proof}

\begin{figure}
\begin{center}
\includegraphics[width=0.75\columnwidth]{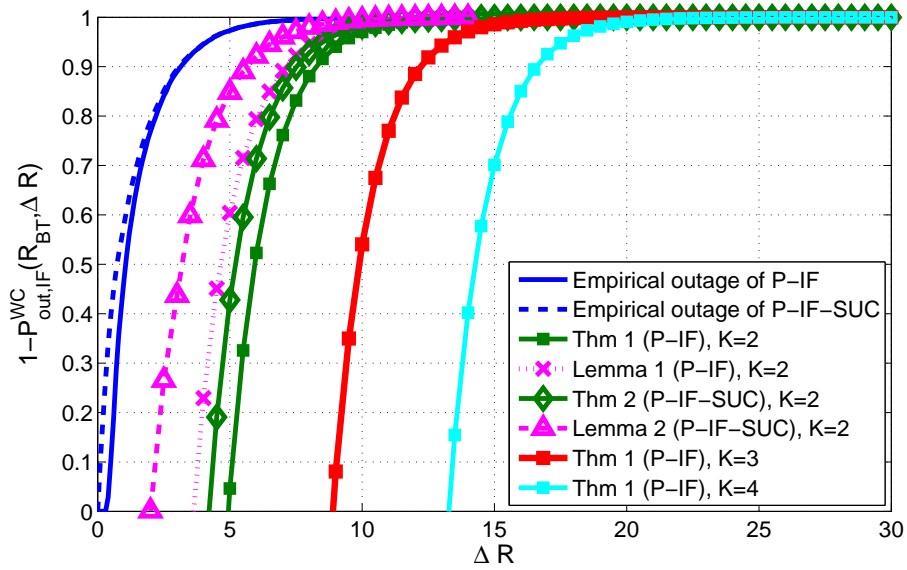}
\end{center}
\caption{Upper bounds on the outage probability of IF source coding for various values of source dimension $K$.}
\label{fig:xx2}
\end{figure}

Figure~\ref{fig:xx2} depicts the bounds derived as well as results of a Monte Carlo evaluation of (\ref{eq:P_WC_OUT}) for the case of a two-dimensional Gaussian and CRE precoding.\footnote{The bounds are computed after applying a factor of $2$ to the lemmas in accordance to footnote~\ref{foot:3}. Again, rather than plotting the worst-case outage probability, we plot its complement.} When calculating the empirical curves and the lemmas, we assumed  high quantization rates ($R_{\rm BT}=14$). The lemmas were calculated by going over a grid of values of $d_1$ and $d_2$ satisfying
$d_1 d_2=2^{2 \cdot 14}$.

%%%%%%%%%%%
\subsection{Application: Distributed Compression for Cloud Radio Access Networks}
%%%%%%%%%%
\label{sec:cran}
%An  application of the distributed
% compression problem is that of compressing %observations of correlated relays or sensors %\cite{Integer-ForcingSourceCoding:OrdentlichErez2017}. %As a concrete application example, consider a cloud %radio access
% (C-RAN) network as depicted in Figure~\ref{fig:C-ran %topology}. %Cloud radio access networks (C-RAN) are a %novel architecture for %next-generation wireless %cellular
%%systems where the baseband processing is migrated %from the base stations to a control unit in the %“cloud.”
%Here, the base stations, which operate as radio units, %are connected via fronthaul links to the managing
%control unit. The fronthaul links carry information %about the baseband signals in the uplink from the %radio units to the control unit. Due to the large bit %rates produced by the quantized signals, compression %prior to transmission on the fronthaul links is %%important. As the signals received by nodes may be h%ighly correlated, and as the nodes may by physically s%eparated, distributed compression is called for.

Since we described IF source coding as well as the precoding over the reals, we outline the application of IF source coding for the cloud  radio access network (C-RAN) scenario assuming a real channel model. We then comment on the adaptation of the scheme to the more realistic scenario of a complex channel.

Consider the C-RAN scenario depicted in
Figure~\ref{fig:C-ran topology}
where $M$ transmitters send their data (that is modeled as an i.i.d. Gaussian source vector) over a $K\times M$ MIMO broadcast channel $\svv{H}\in\mathbb{R}^{K\times M}$.
%\footnote{The model can readily be extended to a complex channel one.}
The data is received at $K$ receivers (relays) that wish to compress and forward it for processing (decoding) at a central node via rate-constrained noiseless bit pipes.

\begin{figure}
\begin{center}
\includegraphics[width=0.75\columnwidth]{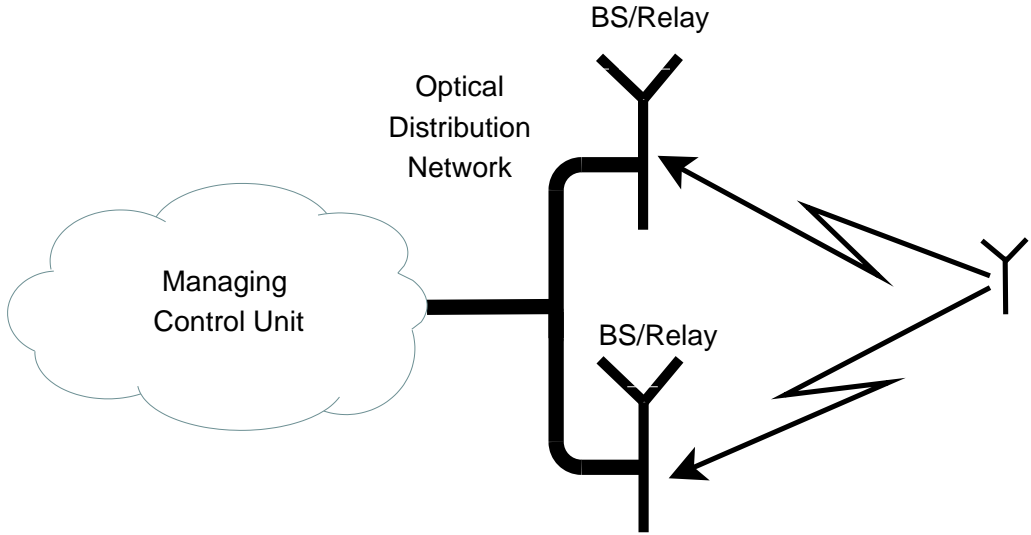}
\end{center}
\caption{Cloud radio access  network communication scenario. Two relays compress and forward the correlated signals they receive from several users.}
\label{fig:C-ran topology}
\end{figure}

As we wish to minimize the distortion at the central node subject to the rate constraints, this is a distributed lossy source coding problem. See depiction in Figure~\ref{fig:C-ran topology}.

Here, the covariance matrix of the received signals at the relays is given by
\begin{align}
\svv{K}_{\bf xx}=\SNR\svv{H}\svv{H}^T+\svv{I}.
\end{align}
We note that we can ``absorb'' the $\SNR$ into the channel and hence we set $\SNR=1$, so that
\begin{align}
\svv{K}_{\bf xx}=\svv{H}\svv{H}^T+\svv{I}.
\end{align}
We further assume that the entries of the channel matrix  $\svv{H}$ are Gaussian i.i.d., i.e. $\forall_{i,j}\svv{H}_{i,j}\sim \mathcal{N}(0,\sigma^{2})$. As mentioned in the introduction, the SVD of this matrix
\begin{align}
\svv{H}=\svv{\tilde{U}}\svv{\tilde{\Sigma}}\svv{\tilde{V}}^T
\end{align}
satisfies that  $\svv{\tilde{U}}$ and $\svv{\tilde{V}}$ belong to the CRE. We may therefore express the (random) covariance matrix as
\begin{align}
\svv{K}_{\bf xx}=\svv{\tilde{U}}(\svv{\tilde{\Sigma}}+\svv{I})(\svv{\tilde{\Sigma}}+\svv{I})^T\svv{\tilde{U}}^T,
\label{eq:svdOfKxx}
\end{align}
where $\svv{\tilde{U}}$ is drawn from the CRE.
It follows that the precoding matrix $\svv{P}$ is redundant (as we assumed that $\svv{P}$ is also drawn from the CRE). Thus, the analysis above holds also for the considered scenario. %considered of %distributed %relays.

Specifically, assuming the encoders are subject to an equal rate constraint, %(which they %know),
then for a given distortion level,  the relation between the compression rate of IF source coding and the guaranteed outage probability (for meeting the prescribed distortion) is bounded using Theorem~\ref{thm:thm1} above.

We note that just as precoded
IF channel coding can be applied to complex channels as
described in  \cite{OutageBehaviorOfRandomlyPrecoded:Domanovitz2015},
so can precoded IF source coding be extended to complex Gaussian sources. In describing an outage event in this case we assume that the precoding matrix is drawn from the circular unitary ensemble (CUE). The bounds derived above (replacing $K$ with $2K$ in all derivations) for the relation between the compression rate of IF source coding and the worst-case outage probability  hold for the C-RAN scenario over complex Gaussian channels $\svv{H}\in\mathbb{C}^{K\times M}$, where the CUE precoding can be viewed as been performed by nature.

\section{Performance Guarantees for Integer-Forcing Source Coding with Deterministic Precoding}
%\section{Conclusions}
\label{sec:additive}
In this section, we consider the performance of IF source coding when used
in conjunction with judiciously chosen (deterministic) precoding.
Worst-case performance will be measured in a stricter sense than in previous sections; namely, no outage is allowed. We begin by establishing an additive bound applicable for general Gaussian sources, in the form of a constant gap from the Berger-Tang benchmark. Similar to the bound established in  \cite{ordentlich2015precoded} for IF channel coding (specifically, Theorem 5 in the latter), the derived gap depends only on the number of sources and the properties of the non-vanishing code which is used as the underlying universal transformation. We note that the derived bound on the gap is even larger than that in the channel coding  counterpart and thus its applicability is limited. The reason for the difference in the derived gap is that unlike in the bound of \cite{ordentlich2015precoded}, we were 
unable to use the transference theorem of Banaszczyk \cite{Banaszczyk} and thus resorted to Minkowski's theorem (as recalled in Appendix C, Theorem~\ref{thm:Minkowski}). 
% to bound....???? whereas in the channel coding counterpart we could leverage the tranference theorem of Banaszczyk \cite{} which is inapplicable in the present setting. 
% %We note that the channel coding derivation can not be applied as is due to the (dual) expression of source coding integer forcing.

Then, in Section~\ref{sec:uncorr}, we consider the special case  of independent Gaussian sources having different variances. Here, we use a very different approach for analysis, by which we are able to  establish a much tighter bound on the gap to the Berger-Tung benchmark. The derived performance guarantees are tight enough  to be useful in practical scenarios, at least for a small number sources. 

\subsection{Additive Bound for General Sources}
%As a concluding remark,
%e note that
Similar to the case of channel coding, we can derive a worst-case additive bound for the gap to the Berger-Tung benchmark.
%We note that
Achieving this guaranteed performance requires joint algebraic number-theoretic based space-time precoding at the encoders. %and therefore %it is not %applicable in %a distributed %setting.
%hence there is no scenario in which the precoding %can be viewed as if it was applied by nature.
The following theorem is due to
Or Ordentlich \cite{OrdentlichPrivate2017}.
\begin{theorem}[Ordentlich]
\label{thm:Or}
For any $K$ sources with covariance matrix $\svv{K}_{\bf xx}$ and Berger-Tung benchmark $R_{\rm BT}$, the excess rate with respect to the Berger-Tung benchmark  (normalized per the number of time-extensions used) of space-time IF source coding with an NVD precoding matrix with minimum determinant $\delta_{\min}$ is bounded by
\begin{align}
R_{\rm IF} -R_{\rm BT} \leq 2K^3\log(2K^2)+K^2\log\frac{1}{\delta_{\min}}.
\end{align}
\label{thm:thm3}
\end{theorem}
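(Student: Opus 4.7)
The plan is to adapt the strategy of Theorem~5 of \cite{ordentlich2015precoded} for IF channel coding, but to substitute Banaszczyk's transference theorem -- the source of the sharper channel-coding bound -- with Minkowski's second theorem, exactly as flagged in the paragraph preceding the statement. Concretely I would take $T=K$ time extensions and an orthogonal (isometric) NVD precoding $\svv{M}\in\mathbb{R}^{K^2\times K^2}$ coming from a cyclic-division-algebra construction. The precoded source over $K$ time slots has covariance $\svv{M}(\svv{I}_K\otimes\svv{K}_{\bf xx})\svv{M}^T$, so the associated IF lattice $\Lambda$ has generator $\svv{F}_{\rm eff}^T$ with $\svv{F}_{\rm eff}\svv{F}_{\rm eff}^T=\svv{I}_{K^2}+\svv{M}(\svv{I}_K\otimes\svv{K}_{\bf xx})\svv{M}^T$ and covolume $|\det\svv{F}_{\rm eff}|=2^{KR_{\rm BT}}$. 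The key reduction, using orthogonality of $\svv{M}$ and writing $\svv{F}\svv{F}^T=\svv{I}_K+\svv{K}_{\bf xx}$, is that for any $\svv{a}\in\mathbb{Z}^{K^2}$, reshaping $\svv{M}\svv{a}$ into a $K\times K$ code matrix $\svv{A}_{\rm mat}$ yields
\begin{equation*}
\|\svv{F}_{\rm eff}^T\svv{a}\|^2=\|\svv{F}^T\svv{A}_{\rm mat}\|_F^2.
\end{equation*}
The normalized IF sum rate per time slot is then $K\log\lambda_{K^2}(\Lambda)$, which (after subtracting $R_{\rm BT}$) is the quantity to bound.

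The second step is to lower-bound $\lambda_1(\Lambda)$ by combining the NVD guarantee with the arithmetic--geometric mean inequality. The NVD property gives $|\det\svv{A}_{\rm mat}|^2\ge\delta_{\min}$ for every nonzero integer input, so the product of the squared singular values of $\svv{F}^T\svv{A}_{\rm mat}$ is at least $|\det\svv{F}|^2\delta_{\min}=2^{2R_{\rm BT}}\delta_{\min}$. Applying AM--GM to these $K$ values yields $\lambda_1(\Lambda)^2\ge K\bigl(2^{2R_{\rm BT}}\delta_{\min}\bigr)^{1/K}$, which already carries the correct scaling in both $R_{\rm BT}$ and $\delta_{\min}$.

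The last step is to plug this lower bound into Minkowski's second theorem (Theorem~\ref{thm:Minkowski} of Appendix~C): $\prod_{k=1}^{K^2}\lambda_k(\Lambda)\le 2^{K^2}|\det\svv{F}_{\rm eff}|/V_{K^2}$, where $V_{K^2}$ is the volume of the Euclidean unit ball in $\mathbb{R}^{K^2}$. Isolating $\lambda_{K^2}$ via the crude bound $\lambda_k\ge\lambda_1$ for $k<K^2$ and substituting the lower bound on $\lambda_1$, the $R_{\rm BT}$-dependent terms cancel up to an $O(R_{\rm BT}/K)$ residual and leave a pure function of $K$ and $\delta_{\min}$. A Stirling-type estimate $-\log V_{K^2}=\Theta(K^2\log K)$, combined with the extra factor of $K$ coming from per-time-slot normalization, produces the $K^3\log K$ scaling and the $K^2$ prefactor on $\log(1/\delta_{\min})$ that appear in the claimed bound $2K^3\log(2K^2)+K^2\log(1/\delta_{\min})$.

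The hard part is exactly the slack in the passage from $\prod_k\lambda_k$ to $\lambda_{K^2}$ via $\lambda_1^{K^2-1}$: this throws away all information on the intermediate successive minima and is precisely what Banaszczyk's transference inequality would control for us. Since $\Lambda$ here is not unimodular and the channel-coding trick of routing the argument through the dual lattice does not translate cleanly, I do not see how to sharpen this step without new ideas, which is why the resulting bound is visibly weaker than its channel-coding analog. A secondary nuisance is extracting the clean constant $2K^3\log(2K^2)$ rather than just a comparable polynomial, which requires careful book-keeping of the Stirling estimate and of where the trivial bound $\lambda_k\ge\lambda_1$ is applied.
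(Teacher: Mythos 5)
Your proposal is correct in outline and shares the paper's overall architecture: take $T=K$ time extensions, precode with a unitary NVD (perfect-code) generator, and control $\lambda_{n}$ of the effective lattice by combining Minkowski's theorem ($\lambda_1^{2(n-1)}\lambda_n^2\le\prod_m\lambda_m^2\le n^n(\det\mathcal{F}^T)^2$) with a lower bound on $\lambda_1$; your final bookkeeping indeed lands within the stated bound. The genuine difference is in how $\lambda_1$ is lower-bounded. The paper imports Lemma~2 and Corollary~1 of \cite{ordentlich2015precoded}, which run a minimum-distance argument over nested QAM constellations of radius $L$ and optimize over $L$, yielding $\lambda_1^2(\mathcal{F}^T)\ge\frac{1}{2K^2}\delta_{\min}^{1/K}2^{2R_{\rm BT}/K}$. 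You instead use the Frobenius-norm identity $\|\svv{F}_{\rm eff}^T\svv{a}\|^2=\|\svv{F}^T\svv{A}_{\rm mat}\|_F^2$ together with AM--GM on the squared singular values of $\svv{F}^T\svv{A}_{\rm mat}$ and the NVD determinant guarantee, giving $\lambda_1^2\ge K\bigl(2^{2R_{\rm BT}}\delta_{\min}\bigr)^{1/K}$. This is self-contained, avoids the detour through the channel-coding corollary, and its constant is in fact slightly better (a factor $2K^3$), so the final bound you obtain is no weaker than the theorem's. Two points to tidy up: (i) the paper works with the real representation of a \emph{complex} perfect code, so the effective lattice has dimension $2K^2$ rather than your $K^2$ (NVD perfect codes are built over $\mathbb{Z}[i]$, so your real $K^2\times K^2$ setup needs either a totally real NVD construction or the complexification step, which only changes constants); and (ii) your closing remark that the dual-lattice/Banaszczyk route fails because $\Lambda$ is not unimodular is not quite the right diagnosis -- the transference theorem does not require unimodularity; the paper simply states it was unable to use it here and resorts to Minkowski, which is exactly the source of the large constant in both your argument and theirs.
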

\begin{proof}
See Appendix~\ref{app:additiveBound}.
\end{proof}

We note that the gap to the Berger-Tung benchmark is large (even larger than the one derived for channel coding IF in \cite{ordentlich2015precoded}) and is thus of limited applicability. Nevertheless, we first note that although the guaranteed gap is large, numerical evidence indicates that true gap is most likely much smaller. Hence, we believe the bound may be significantly tightened. 
Furthermore, if we relax the no outage restriction, the much tighter bounds of  Section~\ref{sec:upper} (using random precoding drawn from the CRE) are directly applicable.

%We note that similarly to the case of IF for channel coding, the gap to the Berger-Tung benchmark is large and thus it has limited applicability.
\subsection{Uncorrelated Sources}
\label{sec:uncorr}
For the special case of uncorrelated Gaussian sources, much tighter bounds (in comparison to Theorem~\ref{thm:Or}) on the worst-case quantization rate of IF for a given distortion level may be obtained.
First, space-time precoding may be replaced with precoding over space only. This
allows to obtain a tighter counterpart to
Theorem~\ref{thm:Or}, as derived in Section~2.4.2 of \cite{UniversalPrecodingforParallelGaussianChannels:Fischler2014}.

We next derive yet tighter performance guarantees, also following ideas developed in \cite{UniversalPrecodingforParallelGaussianChannels:Fischler2014}, by numerically evaluating the performance of IF source coding over a ``densely'' quantized set of source (diagonal) covariance matrices belonging to the compound class, and then bounding the excess rate w.r.t. to the evaluated ones for any possible source vector in the compound class.

% For ease of notation we omit the reference to the oroginal source from the covairance matrix, i.e., we denote
% \begin{align}
%     \svv{K}\triangleq\svv{K}_{x,x}.
% \end{align}
In the case of uncorrelated sources, the covariance matrix $\svv{K}_{\bf x x}$ is diagonal.
Hence, (\ref{eq:svdOfKxx}) becomes
\begin{align}
    \svv{K}_{\bf x x}&=\svv{\tilde{\Sigma}}\svv{\tilde{\Sigma}}^T \nonumber \\
    &\triangleq\svv{S},
\end{align}
where
\begin{align}
    \svv{S}=\begin{bmatrix}
    s_1^2 & 0 & \cdots & 0 \\
    0 & s_2^2 & \cdots & 0 \\
    \vdots & 0 & \ddots & \vdots \\
    0 & 0 & \cdots & s_K^2
    \end{bmatrix}.
\end{align}
%Denoting $\svv{D}=
%\svv{I}+\svv{K}_{\bf x x}$,
%\begin{align}
%    \svv{I}+\svv{K}_{\bx \bx}=\svv{D}
%\end{align}
%where $\svv{D}$ is a diagonal matrix %belonging to the compound set
We denote ${\bf s}=\diag(\svv{S})$. The compound set of sources may be parameterized by
\begin{align}
\mathcal{S}(R_{\rm BT}) = \left\{\svv{S}:\sum_{i=1}^{K} \frac{1}{2} \log \left(1 + s_{i}^2\right) =R_{\rm BT}\right\}.
\label{eq:D_R}
\end{align}

%\in\mathbb{D}(R_{\rm BT})$ is a diagonal .

% the compound class is defined as
% \begin{align}
% \mathcal{K}_D(R_{\rm BT}) &= \left\{\svv{K}\in\mathbb{R}^{K\times K}:\log\det\left(\svv{I}+\svv{K}\right)=R_{\rm BT}\right\} \nonumber \\
% &=\left\{\svv{K}\in\mathbb{R}^{K\times K}:\sum_{k=1}^K\log\left(1+\svv{K}_{k,k}\right)=R_{\rm BT}\right\}.
% \end{align}
We note that we may
%denote
%\begin{align}
%    d_{i,i}=2^{R_i}-1
%    \label{eq:d_R}
%\end{align}
%and thus associate with the class of sources which belong to $\svv{D}\in\mathbb{D}(R_{\rm BT})$,
associate with each diagonal element a ``rate'' corresponding to an individual source
\begin{align}
R_i=\frac{1}{2} \log \left(1+s_i^2\right).
\label{eq:R_S}
\end{align}
Thus, the compound class of sources
may equivalently be represented by the
set of rates
\begin{align}
\mathcal{R}(R_{\rm BT})=\left\{\left(R_1,R_2,\ldots,R_K\right) \in \mathbb{R}^K:\sum_{i=1}^K R_i=R_{\rm BT} \right\}.
\end{align}

We define a  ``quantized"  rate-tuple set as follows. The interval $[0, R_{\rm BT}]$ is divided into $N$ sub-intervals, each of length $\Delta = R_{\rm BT}/N$. Thus, the resolution is determined by the parameter $\Delta$. The quantized rate-tuples belong to the grid
\begin{align}
\mathcal{R}^{\Delta}(R_{\rm BT})=\Bigl\{\left(R_1^{\Delta},R_2^{\Delta},\ldots,R_K^{\Delta}\right)\in\Delta \cdot R_{\rm BT}\cdot{\mathbb{Z}^{+}}^K:  \nonumber \\
\sum_{i=1}^K R_i^\Delta=R_{\rm BT} \Bigr \}.
\end{align}
%Similar to (\ref{eq:D_R}) we define %$\svv{D}^{\Delta}$ as
%\begin{align}
%    d_{i,i}^{\Delta}=2^{R_i^{\Delta}}-1
%    \label{eq:d_R}
%\end{align}
%which belongs to
We may similarly define the (non-uniformly)
quantized set $\mathcal{S}^{\Delta}(R_{\rm BT})$ of diagonal matrices such that the diagonal entries satisfy
$s_{i,\Delta}^2=2^{2R_i^{\Delta}}-1$, $i=1,\ldots,K$, where $\svv{R}^{\Delta}\in \mathcal{R}^{\Delta}(R_{\rm BT})$.

%We also denote with %$\svv{K}_{x,x}^{\Detla}$ the covariance %matrix which is inducing %$\svv{D}^{\Delta}$, i.e.,
%\begin{align}
%    \svv{K}_{x,x}^{\Detla}=\svv{I}+\svv{D}^%{\Delta}.
%\end{align}

% We define the
% The set of covariance matrices corresponding to the set $R_D^{\Delta}(R_{\rm BT})$ is denoted by $\mathcal{K}_D^{\Delta}(R_{\rm BT})$.

\begin{theorem}
For any Gaussian vector of independent sources with covariance matrix $\svv{S}$ such that  $\svv{S}\in \mathcal{S}(R_{\rm BT})$, the  rate of IF source coding with a given precoding matrix $\svv{P}$ is upper bound by
\begin{align}
    R_{\rm IF}^{\rm opt}(\svv{S},\svv{P})\leq
    \max_{\svv{S}^{\Delta} \in \mathcal{S}^\Delta(R_{\rm BT})  } R_{\rm IF}^{\rm opt}\left(\svv{S}^{\Delta},\svv{P}\right)+K\log\eta
    %RPIF(P∗, H) ≤ RPIF(H∆R ) + M log β(RBT),
\end{align}
%where $\svv{D}^{\Delta}\in\mathbb{D}^{\Delt%a}(R_{\rm BT})$,$\svv{P}^{\Delta}$ is a precoding %matrix which achieves $R_{\rm %PIF}(\svv{D}^{\Delta})$ and
where
\begin{align}
    \eta=\sqrt{\frac{2^{2\left(\frac{R_{\rm BT}}{K}\right)}-1}{2^{2\left(\frac{R_{\rm BT}}{K}-(K-1)\Delta R_{\rm BT}\right)}-1}}.
\label{eq:eta}
\end{align}
\label{thm4}
\end{theorem}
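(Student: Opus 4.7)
The plan is, for every $\svv{S}\in\mathcal{S}(R_{\rm BT})$, to exhibit a quantized covariance $\svv{S}^\Delta\in\mathcal{S}^\Delta(R_{\rm BT})$ that ``upper bounds'' $\svv{S}$ in the coordinatewise sense $s_i^2\le \eta^2\, s_{i,\Delta}^2$ for every $i$, and then to push this bound through the IF-rate expression. The key algebraic observation is that, since $\svv{P}\svv{P}^T=\svv{I}$, one has
\begin{align}
\svv{a}^T(\svv{I}+\svv{P}\svv{S}\svv{P}^T)\svv{a}=\|\svv{a}\|^2+\sum_{i=1}^K s_i^2\,\bigl|(\svv{P}^T\svv{a})_i\bigr|^2.
\end{align}
The identity term $\|\svv{a}\|^2$ scales trivially by $\eta^2\ge 1$, so the coordinatewise bound on the \emph{source variances} alone (rather than on $1+s_i^2$) implies $\svv{a}^T(\svv{I}+\svv{P}\svv{S}\svv{P}^T)\svv{a}\le \eta^2\,\svv{a}^T(\svv{I}+\svv{P}\svv{S}^\Delta\svv{P}^T)\svv{a}$. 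Plugging in the integer matrix $\svv{A}^\star$ optimal for $\svv{S}^\Delta$ and invoking sub-optimality of $\svv{A}^\star$ for $\svv{S}$ yields $R_{\rm IF}^{\rm opt}(\svv{S},\svv{P})\le R_{\rm IF}^{\rm opt}(\svv{S}^\Delta,\svv{P})+K\log\eta$, from which the theorem follows by maximizing over $\svv{S}^\Delta\in\mathcal{S}^\Delta(R_{\rm BT})$.

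The construction of $\svv{S}^\Delta$ I would use is as follows. Writing $R_i=\tfrac{1}{2}\log(1+s_i^2)$, let $k^\star=\arg\max_i R_i$; since the $R_i$ are nonnegative and sum to $R_{\rm BT}$, one has $R_{k^\star}\ge R_{\rm BT}/K$. For every $i\ne k^\star$, take $R_i^\Delta=\lceil R_i/(\Delta R_{\rm BT})\rceil\,\Delta R_{\rm BT}$, so that $R_i^\Delta\ge R_i$ and hence $s_i^2\le s_{i,\Delta}^2$ holds trivially. Then set $R_{k^\star}^\Delta=R_{\rm BT}-\sum_{i\ne k^\star}R_i^\Delta$, which is automatically a multiple of $\Delta R_{\rm BT}$, lies in $\bigl(R_{k^\star}-(K-1)\Delta R_{\rm BT},\,R_{k^\star}\bigr]$, and is nonnegative under the implicit condition $\Delta\le 1/(K(K-1))$ (precisely the regime where $\eta$ is finite). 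Hence $\svv{S}^\Delta\in\mathcal{S}^\Delta(R_{\rm BT})$ and the only coordinate whose variance bound must be worked for is $i=k^\star$.

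The only nontrivial step, and the place I expect the real work, is showing that $s_{k^\star}^2/s_{k^\star,\Delta}^2\le\eta^2$ uniformly over the compound class. Using $s^2=2^{2R}-1$ together with the lower bound $R_{k^\star}^\Delta> R_{k^\star}-(K-1)\Delta R_{\rm BT}$ from the construction, this ratio is bounded by
\begin{align}
\phi(R_{k^\star})\triangleq\frac{2^{2R_{k^\star}}-1}{2^{2(R_{k^\star}-(K-1)\Delta R_{\rm BT})}-1}.
\end{align}
A short derivative calculation (substitute $u=2^{2R_{k^\star}}$ and differentiate $\phi=2^{2c}(u-1)/(u-2^{2c})$ with $c=(K-1)\Delta R_{\rm BT}$) gives $d\phi/du=2^{2c}(1-2^{2c})/(u-2^{2c})^2<0$, so $\phi$ is strictly decreasing in $R_{k^\star}$. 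Since the infimum of $R_{k^\star}$ over the compound class is $R_{\rm BT}/K$, we conclude $s_{k^\star}^2/s_{k^\star,\Delta}^2\le \phi(R_{\rm BT}/K)=\eta^2$, matching the definition~\eqref{eq:eta}. Combined with the first paragraph this closes the argument.
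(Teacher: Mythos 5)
Your proposal is correct and follows essentially the same route as the paper: the same round-up quantization of all but the largest-rate coordinate with the remainder absorbed into that coordinate, the same monotonicity argument showing the worst-case scaling ratio is attained at $R=R_{\rm BT}/K$ (yielding $\eta^2$), and the same use of the integer matrix optimal for $\svv{S}^\Delta$ as a suboptimal choice for $\svv{S}$. Your direct inequality on the quadratic form simply merges the paper's Lemma~\ref{lem:lem3}/Lemma~\ref{lem:lem4} two-step into one line, and your explicit remark that $\Delta\le 1/(K(K-1))$ is needed for $\eta$ to be finite makes precise a condition the paper leaves implicit.
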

\begin{proof}

Assume we have a covariance matrix $\svv{S}$ in the compound class. Hence, its associated rate-tuple satisfies $(R_1, R_2, \ldots, R_K) \in \mathcal{R}(R_{\rm BT})$. Assume without loss of generality that
\begin{align}
    R_1 \leq R_2 \leq \ldots \leq R_K.
\end{align}

% to show that the lemma holds we note that from (\ref{eq:rateScaleEq}) we have $R_i^{\Delta}\geq R_i$ for $1\leq i \leq K-1$.  $s_i^2\leq s_{i,\Delta}^2$ for $1\leq i \leq K-1$, and since $\eta>1$ we get
% \begin{itemize}
%     \item For $1\leq i \leq K-1$: $s_i^2\leq \eta^2 s_{i,\Delta}^2$ since $\eta>1$.
%     \item For $i=K$: $s_K=\eta^2 s_{K,\Delta}^2$.
% \end{itemize}
%  We conclude that $s_{i}^2\leq\eta^2{s_{i,\Delta}^2}$ for every $i$.
% %  Recalling (\ref{eq:R_S}) we have
% %  \begin{align}
% %  R_i &= \log(1+{\bf S}_i^2) \nonumber \\
% %      &\leq \log(1+\eta^2{\bf S}_{i,\Delta}^2) \nonumber \\
% %      &\leq \log(\eta^2)+\log(1+{\bf S}_{i,\Delta}^2) \nonumber \\
% %      & = \log(\eta^2) + R_i^{\Delta}.
% %  \end{align}

%Since $\alpha_k<1$ we have $\tilde{d}_{i,i}\geq d_{i,i}$ hence
% Explicitly, since $\eta>1$ we have
% \begin{align}
%     \tilde{{\bf S}}_{i,\Delta}=
%     \begin{cases}
%     \eta {\bf S}_{i,\Delta} ~~~ 1\leq i \leq K-1 \\
%     {\bf S}_{K} ~~~ i=K
%     \end{cases}
% \end{align}

%since $\eta>1$ we have $\widetilde{{\bf S}_{i,\Delta}^2}\geq {\bf S}_{i}^2$ for every $i$.

By \eqref{eq:specifica}, the rate of IF source coding associated with a specific  integer linear combination vector $\svv{a}$ is
\begin{align}
    R_{{\rm IF},k}(\svv{S},\svv{P};\svv{A})=\frac{1}{2}\log\left( {\bf a}_k^T\svv{P}\left(\svv{I}+\svv{S}\right)\svv{P}^T{\bf a}_k\right).
    \label{eq:quadratic}
\end{align}

Denote
\begin{align}
    {\bf v}={\bf a}_k^T\svv{P}.
    \label{eq:defOfv}
\end{align}
We will need the following two lemmas, whose proofs appear in Appendices \ref{app:proofOfLemma3} and
\ref{app:proofOfLemma4}, respectively.
\begin{lemma}
For any diagonal covariance matrix $\svv{S}$, associated with a rate-tuple  $(R_1, R_2, \ldots, R_K) \in \mathcal{R}(R_{\rm BT})$, there exists a diagonal covariance matrix  $\svv{S}^{\Delta}$, associated with a rate-tuple  $(R^{\Delta}_1, R^{\Delta}_2, \ldots, R^{\Delta}_K) \in \mathcal{R}^\Delta(R_{\rm BT})$, such that
\begin{align}
s_i \leq \eta^2 s_{i,\Delta}^2
\end{align}
% \begin{align}
% R_i \leq \log(\eta^2) +  R_i^{\Delta},
% \end{align}
for $1\leq i \leq K$, where $\eta$ is defined in (\ref{eq:eta}).
\label{lem:lem3}
\end{lemma}
% \begin{proof}
% See Appendix~\ref{app:proofOfLemma3}.
% \end{proof}

\begin{lemma}
Consider a  Gaussian vector with a diagonal covariance matrix $\svv{S}$ and let $\svv{A}$ be an invertible integer matrix.
%s  forming a full-rank integer matrix  $\svv{A}$ (as its rows).
Then for any $\beta \geq 1$, we have
\begin{align}
    R_{{\rm IF},k}(\beta^2\svv{S},\svv{P};\svv{A})\leq \log\left(\beta\right)+ R_{{\rm IF},k}(\svv{S},\svv{P};\svv{A}).
\end{align}
\label{lem:lem4}
\end{lemma}
% For a given integer matrix $\svv{A}$, the rate of precoded IF-source coding can be expressed as
% \begin{align}
%     &R_{\rm IF}(\svv{K}_{x,x},\svv{P},\svv{A}) \nonumber \\
%     &=\frac{K}{2}\log\left(\max_{k=1,...,K}{\bf a}^T\left(\svv{I}+\svv{P}\svv{K}_{x,x}\svv{P}^T\right){\bf a}\right) \nonumber \\
%     &=\frac{K}{2}\log\left(\max_{k=1,...,K}{\bf a}^T\svv{P}\left(\svv{I}+\svv{K}_{x,x}\right)\svv{P}^T{\bf a}\right) \nonumber \\
%     &=\frac{K}{2}\log\left(\max_{k=1,...,K}{\bf a}^T\svv{P}\left(\svv{I}+\svv{S}\right)\svv{P}^T{\bf a}\right) \nonumber \\
%     &\triangleq R_{\rm IF}(\svv{S},\svv{P},\svv{A}).% \nonumber \\
%     %&= \frac{K}{2} \max_{k=1,...,K} R_{\rm IF}(\svv{K}_{x x},\svv{P},{\bf a})
% \end{align}
% We denote the IF rate of the $k$ layer as
% \begin{align}
%     R_{\rm IF}(\svv{S},\svv{P},{\bf a})\triangleq {\bf a}^T\svv{P}\left(\svv{I}+\svv{S}\right)\svv{P}^T{\bf a}.
% \end{align}
% \begin{proof}
% See Appendix~\ref{app:proofOfLemma4}.
% \end{proof}

%Since all layers are bounded by the same factor it suffices to analyze the bound for the layer with the maximal rate.

Using Lemma \ref{lem:lem3}, we denote by $\svv{S}^{\Delta}$ the covariance matrix associated with $\svv{S}$, and whose existence is guaranteed by the lemma. Recalling (\ref{eq:R_if_p}), we have
\begin{align}
    R_{\rm IF}^{\rm opt}({\svv{S}^{\Delta}},\svv{P})=\frac{K}{2}\log\left(\min_{\substack{\svv{A}\in\mathbb{Z}^{K\times K} \\ \det{\svv{A}}\neq0}}\max_{k=1,...,K} R_{\rm IF}({\svv{S}^{\Delta}},\svv{P};{\bf a}_k) \right).
\end{align}

Denoting $\svv{A}^{\rm opt}_{\Delta}$ as the optimal integer matrix for the quantized source $\svv{S}^{\Delta}$, it follows that
\begin{align}
R^{\rm opt}_{\rm IF}({\svv{S}},\svv{P}) &=
R_{\rm IF}({\svv{S}},\svv{P};\svv{A}^{\rm opt}) \nonumber \\
&\leq R_{\rm IF}({\svv{S}},\svv{P};\svv{A}^{\rm opt}_{\Delta}),
\label{eq:optimalityOfA}
\end{align}
where the inequality follows since $\svv{A}^{\rm opt}_{\Delta}$ is the optimal integer matrix for $\svv{S}^{\Delta}$ and not necessarily for $\svv{S}$. From \eqref{eq:quadratic} and by the definition of $\svv{v}$ in \eqref{eq:defOfv}, we have
\begin{align}
     R_{{\rm IF},k}( \eta^2{\svv{S}^{\Delta}},\svv{P};\svv{A}^{\rm opt}_{\Delta}) & =\frac{1}{2} \log\left(\sum_{i=1}^K { v}_i^2(1+\eta^2{s_{i,\Delta}^2})\right) \nonumber \\
     & \geq \frac{1}{2}\log\left(\sum_{i=1}^K { v}_i^2(1+s_{i}^2)\right) \nonumber \\
     & =  R_{{\rm IF},k}(\svv{S},\svv{P};\svv{A}^{\rm opt}_{\Delta}).
     \label{eq:75}
\end{align}
Using Lemma~\ref{lem:lem4}, we further have that
\begin{align}
R_{{\rm IF},k}( \eta^2{\svv{S}^{\Delta}},\svv{P};\svv{A}^{\rm opt}_{\Delta}) \leq \frac{1}{2}\log(\eta^2)+R_{{\rm IF},k}({\svv{S}^{\Delta}},\svv{P};\svv{A}^{\rm opt}_{\Delta}).
\label{eq:76}
\end{align}
Combining (\ref{eq:75}) and (\ref{eq:76}), we obtain
\begin{align}
    R_{{\rm IF},k}(\svv{S},\svv{P};\svv{A}^{\rm opt}_{\Delta}) & \leq R_{\rm IF}( \eta^2{\svv{S}^{\Delta}},\svv{P};\svv{A}^{\rm opt}_{\Delta}) \nonumber \\
    & \leq \log(\eta)+R_{{\rm IF},k}({\svv{S}^{\Delta}},\svv{P};\svv{A}^{\rm opt}_{\Delta}).
    \label{eq:78}
\end{align}
%Now assuming ${\bf a}_k^T$ are the rows of $\svv{A}^{\rm opt}_{\Delta}$, by
We therefore have
(\ref{eq:78}) we have
\begin{align}
R_{\rm IF}({\svv{S}},\svv{P};\svv{A}^{\rm opt}_{\Delta}) & = K \max_{k=1,...,K} R_{{\rm IF},k}({\svv{S}},\svv{P};\svv{A}^{\rm opt}_{\Delta}) \nonumber \\
& \leq K \max_{k=1,...,K} \bigl( \log(\eta)+R_{{\rm IF},k}({\svv{S}^{\Delta}},\svv{P};\svv{A}^{\rm opt}_{\Delta})\bigr) \nonumber \\
& = K\log(\eta) + R_{\rm IF}(\svv{S}^{\Delta},\svv{P};\svv{A}^{\rm opt}_{\Delta}).
\end{align}
Thus, recalling \eqref{eq:optimalityOfA}, we conclude that
\begin{align}
    R_{\rm IF}({\svv{S}},\svv{P};\svv{A}^{\rm opt}) \leq R_{\rm IF}(\svv{S}^{\Delta},\svv{P};\svv{A}^{\rm opt}_{\Delta}) + K\log(\eta)
\end{align}
and therefore, for any $\svv{S}\in \mathcal{S}(R_{\rm BT})$, we have
\begin{align}
R_{\rm IF}^{\rm opt}({\svv{S}},\svv{P}) \leq \max_{\svv{S}^{\Delta}\in\mathbb{S}^{\Delta}}R_{\rm IF}^{\rm opt}(\svv{S}^{\Delta},\svv{P}) + K\log(\eta).
\end{align}
%We note that for $i=1,2,\ldots,K-1$ we have $R_i^{\Delta}\geq R_i$ and thus $R_K^{\Delta}\leq R_K$.
This concludes the proof of
the theorem.

% Therefore, to conclude, for a given convariance matrix $\svv{K}\in\mathcal{K}_D(R_{\rm BT})$ and precoding matrix $\svv{P}^{\Delta}$ there are rates $R_1\leq R_2 \leq \ldots R_K$. Using the method described in (\ref{eq:rateScaleEq}) we transform $R_K$ to $R_K^{\Delta}$. Hence we have
% \begin{align}
%      & R_{\rm PIF}(\svv{P}^{\Delta},\eta^2(R_{\rm BT})\cdot\svv{K}) \nonumber \\
%      &=\frac{K}{2}\max_{\substack{\svv{A}\in\mathbb{Z}^{K\times K} \\ \det{\svv{A}}\neq0}} R_{\rm IF}(\eta^2(R_{\rm BT})\cdot\svv{K},\svv{P}^{\Delta},{\bf a}_k)\nonumber \\
%      & \leq \frac{K}{2} \max_{\substack{\svv{A}\in\mathbb{Z}^{K\times K} \\ \det{\svv{A}}\neq0}} \left(\log\left(\eta^2(R_{\rm BT})\right)+ R_{\rm IF}(\svv{K},\svv{P},{\bf a}_k) \right) \nonumber \\
%      & = K\log\left(\eta\right) + R_{\rm PIF}(\svv{P}^{\Delta},\svv{K})
% \end{align}
% where $\svv{P}^{\Delta}$ achieves $R_{\rm PIF}(\svv{K}_D^{\Delta})$ and $\eta$ was defined in (\ref{eq:67}). Hence we have
% \begin{align}
% R_{\rm PIF}(\svv{P}^{\Delta},\svv{K})\leq R_{\rm PIF}\left(\svv{K}^{\Delta}\right)+M\log\eta.
% \end{align}
\end{proof}
%\bibliographystyle{IEEEtran}
%\bibliography{eladd}

As an example for the achievable performance, Figure~\ref{fig:diagSource_boundFromAnt} gives the empirical worst-case performance for two and three (real) sources that is achieved when using IF source coding and a fixed precoding matrix over the grid $\mathcal{R}^{\Delta}(R_{\rm BT})$ for $\Delta=0.01$. The precoding matrix was taken from \cite{oggier2004algebraic}. The explicit precoding matrix used for two sources is $\svv{P}=\svv{\rm cyclo_2}$ where
\begin{align}
    \svv{\rm cyclo_2}=\begin{bmatrix}
    -0.5257311121 & -0.8506508083 \\
    -0.8506508083 & 0.5257311121
    \end{bmatrix},
    \label{eq:cyclo}
\end{align}
and for three sources $\svv{P}=\svv{\rm cyclo_3}$ where
\begin{align}
    \svv{\rm cyclo_3}=\begin{bmatrix}
    -0.3279852776 & -0.5910090485 & -0.7369762291 \\
    -0.7369762291 & -0.3279852776 & 0.5910090485 \\
    -0.5910090485 & 0.7369762291 & -0.3279852776
    \end{bmatrix}.
    \label{eq:cyclo3}
\end{align}

Rather than plotting the gap from the Berger-Tung benchmark, we plot the efficiency $\frac{R_{IF}}{R_{\rm BT}}$, i.e., the ratio of the (worst-case) rate of IF source coding and $R_{\rm BT}$. We also plot the upper bound on $R_{\rm IF}(\svv{S},\svv{P})$ given by Theorem~\ref{thm4}. As is apparent from Figure~\ref{fig:diagSource_boundFromAnt}, the guaranteed efficiency approaches $1$ quite rapidly as the quantization rate grows.

%which is the upper bound on the guaranteed rate when %using IF source coding over all sources with $R_{\rm BT}$.
%We note that the bound is tight for (not very) high $R_{\rm BT}$.

\begin{figure}[h]
\begin{center}
\includegraphics[width=0.75\columnwidth]{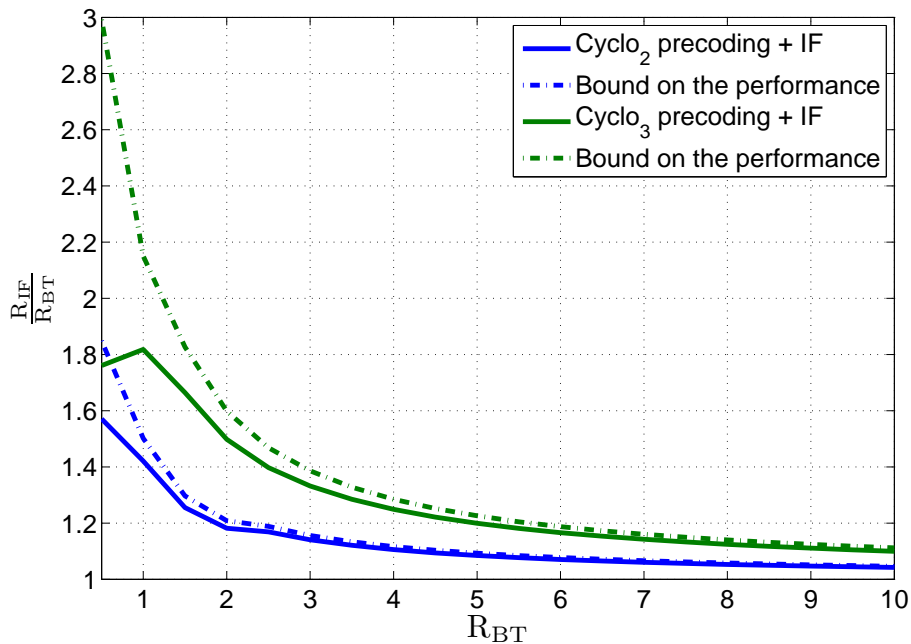}
\end{center}
\caption{Empirical and guaranteed (upper bound) worst-case efficiency of IF source coding for two and three uncorrelated Gaussian sources, when using the precoding matrices $\svv{\rm cyclo_2}$ and $\svv{\rm cyclo_3}$ given in \eqref{eq:cyclo}-\eqref{eq:cyclo3}, respectively,  and taking $\Delta=0.01$ for the calculation of Theorem~\ref{thm4}.}
\label{fig:diagSource_boundFromAnt}
\end{figure}

\section{Summary}
It has been observed in previous works that integer-forcing source coding is a very effective method for compression of distributed Gaussian sources, for ``most'' but not all source covariance matrices. 
In the present work we have  quantified the measure of bad covariance matrices by characterizing the probability that integer-forcing source coding fails as a function of the allocated rate, where the probability is with respect to a random orthonormal transformation that is applied to the sources prior to quantization. This characterization is directly applicable to the case where the signals to be compressed correspond to the antenna inputs of relays in an i.i.d. Rayleigh fading environment, as in such a scenario the transformation can be viewed as being done by nature. 

Integer-forcing source coding is also useful in a non-distributed scenario, where the integer-forcing scheme offers a universal, yet practical, compression method using equal-rate quantizers. Specifically, quantization is performed after passing the sources through an algebraic number theoretic based space-time transformation, followed by a modulo operation.
We have obtained constant-gap universal bounds on the maximal possible rate loss when integer-forcing source is used in such a scenario.
Further, we have shown that when it is known that the sources are uncorrelated but may be of different variances, applying a space-only transformation is sufficient to attain universal performance guarantees and moreover that the resulting bound on the maximal possible rate loss is much smaller than for the general case. An interesting avenue for further research is to try to generalize the numerical bounding technique applied in Section ~\ref{sec:uncorr} to obtain tighter bounds for the performance of precoded integer-forcing source coding of general Gaussian sources.

\appendices
\section{Proof of Theorem~\ref{lem:lem1}}
\label{app:proofOfLemma1}
Following the footsteps of Lemma~2 in \cite{OutageBehaviorOfRandomlyPrecoded:Domanovitz2015}  and adopting the same geometric interpretation described there, we may interpret
$\prob\left(\|\svv{D}^{-1/2}\svv{U}^T{\bf a}\|<\sqrt{\genGam}\right)$ as the ratio of the surface area of an ellipsoid that is inside a ball with radius $\sqrt{\genGam}$ and the surface area of the entire ellipsoid.
The axes of this ellipsoid are defined by $x_i=\frac{\|{\bf a}\|}{\sqrt{d_i}}$.
% \begin{align}
%  x_i=\frac{\|{\bf a}\|}{\sqrt{d_i}}.
% \end{align}

Denote the vector ${\bf o}_{\|{\bf a}\|}$ as a vector drawn from the CRE with norm $\|{\bf a}\|$. Using Lemma~1 in \cite{OutageBehaviorOfRandomlyPrecoded:Domanovitz2015} and since we assume that $\svv{U}^T$ is drawn from the CRE, we have that the right hand side of (\ref{eq:outageProbReal}) is equal to
%As in Section~\ref{sec:BoundforNrxNtchannels}, we want to calculate (\ref{eq:ellipsCirecEq})
\begin{align}
&\sum_{{\bf a}\in{\mathbb{A}(\genGam,1/d_{\max},K)}}\prob\left(\|\svv{D}^{-1/2}{\bf o}_{\|{\bf a}\|}<\sqrt{\genGam}\right) \nonumber \\
&=\sum_{{\bf a}\in{\mathbb{A}(\genGam,1/d_{\max},K)}}\frac{\rm CAP_{\rm ell}(x_1,\cdots,x_K)}{L(x_1,\cdots,x_K)}
\label{eq:ratioForSUC}
\end{align}
where
\begin{align}
{\rm CAP_{\rm ell}(x_1,\cdots,x_K)}<K\frac{\pi^{K/2}}{\Gamma(1+K/2)}\sqrt{\genGam}^{K-1}\triangleq\overline{\rm CAP_{\rm ell}},
\label{eq:CAPellSUC}
\end{align}
and
\begin{align}
{L(x_1,\cdots,x_K)}&>\frac{\pi^{K/2}}{\Gamma(1+K/2)}\frac{\aNorm^{K}}{\prod_{i=1}^{K}\sqrt{d_i}}\sum_{i=1}^{K}\frac{\sqrt{d_i}}{\aNorm}\\
&>\frac{\pi^{K/2}}{\Gamma(1+K/2)}\frac{\aNorm^{K-1}{\sqrt{d_{\max}}}}{2^{R_{\rm BT}}}\triangleq\underline{L}.
\label{eq:ellCirSUC}
\end{align}
Substituting~(\ref{eq:CAPellSUC}) and (\ref{eq:ellCirSUC}) in (\ref{eq:ratioForSUC}), we obtain
\begin{align}
& \sum_{{\bf a}\in{\mathbb{A}(\genGam,1/d_{\max},K)}}\frac{\rm CAP_{\rm ell}(x_1,\cdots,x_K)}{L(x_1,\cdots,x_K)} \nonumber \\
& <  \sum_{{\bf a}\in{\mathbb{A}(\genGam,1/d_{\max},K)}}\frac{\overline{\rm CAP_{\rm ell}}}{\underline{L}} \nonumber \\
&= \sum_{{\bf a}\in{\mathbb{A}(\genGam,1/d_{\max},K)}}\frac{K{\sqrt{\genGam}}^{K-1}}{\frac{\aNorm^{K-1}}{2^{R_{\rm BT}}}{\sqrt{d_{\max}}}}.
\end{align}
Recalling (see \eqref{eq:beta}) that $\genGam=\alNt 2^{-\frac{2}{K}(R_{\rm BT}+\Delta R)}$, we obtain
\begin{align}
&\prob\left(R_{\rm IF-SUC}^{\rm opt}(\svv{D},\svv{U})>R_{\rm BT}+\Delta R\right) \nonumber \\
&< \sum_{{\bf a}\in\mathbb{A}(\genGam,1/d_{\max},K)}\frac{K\alNt^{\frac{K-1}{2}}2^{-\frac{K-1}{K}(R_{\rm BT}+\Delta R)}}
{\frac{\aNorm^{K-1}}{2^{R_{\rm BT}}}{\sqrt{d_{\max}}}}.
\end{align}

\section{Proof of Theorem~\ref{thm:thm1}}
\label{app:proofOfThm1}
To establish Theorem~\ref{thm:thm1}, we follow the footsteps of the proof of Theorem~\ref{thm:thm1} in \cite{OutageBehaviorOfRandomlyPrecoded:Domanovitz2015} to obtain
\begin{align}
&\sum_{{\bf a}\in{\mathbb{A}(\genGam,1/d_{\max},K)}}\prob\left(\|\svv{D}^{-1/2}\svv{U}^T{\bf a}\|<\sqrt{\genGam}\right) \nonumber \\
&\leq
\sum_{{\bf a}\in{\mathbb{A}(\genGam,1/d_{\max},K)}}\frac{K{\sqrt{\genGam}}^{K-1}}{\frac{\aNorm^{K-1}}{2^{R_{\rm BT}}}{\sqrt{d_{\max}}}}
\label{eq:summation}
\end{align}
where $\mathbb{A}(\genGam,1/d_{\max},K)$ and $\genGam$ are defined in (\ref{eq:A_beta_d_general}) and (\ref{eq:beta}), respectively. Noting that
$$\mathbb{A}(\genGam,1/d_{\max},K) \subseteq \left\{\svv{a}: \|\svv{a}\| \leq \left\lfloor\sqrt{{\genGam}{d_{\max}}}\right\rfloor+1 \right\},$$
the summation in \eqref{eq:summation} can be bounded as
\begin{align}
\leq\sum^{\lfloor\sqrt{\genGam d_{\max}}\rfloor}_{k=0}\sum_{k<\|{\bf a}\|\leq k+1}\frac{K{\sqrt{\genGam}}^{K-1}}{\frac{k^{K-1}}{2^{R_{\rm BT}}}{\sqrt{d_{\max}}}}
\label{eq:96}
\end{align}
We apply Lemma 1 in \cite{OrdentlichErez2012} (a bound for the number of integer vectors contained in a ball of a given radius). Using this bound while noting that when $\|{\bf a}\|=1$ there are exactly $K$ integer vectors, the right hand side of (\ref{eq:96}) may be further bounded as
\begin{align}
&\leq \frac{K{\sqrt{\genGam}}^{K-1}2^{R_{\rm BT}}}{{\sqrt{d_{\max}}}}{\rm Vol}(\mathcal{B}_{K}(1)) \times \left[K + \sum^{\lfloor\sqrt{\genGam d_{\max}}\rfloor}_{k=1} \right. \nonumber \\
& \left. ~~~~~~~\frac{\left(k+1+\frac{\sqrt{K}}{2}\right)^{K}-\left(\max\left(k-\frac{\sqrt{K}}{2},0\right)\right)^{K}}{k^{K-1}}\right]
\label{eq:81}
\end{align}
where we note that  (\ref{eq:81}) trivially holds when
$\left\lfloor\sqrt{{\genGam}{d_{\max}}}\right\rfloor=0$
since $\mathbb{A}(\genGam,1/d_{\max},K)$ is the empty set in this case
and hence the left hand side of \eqref{eq:ratioForSUC}
evaluates to zero.

The right hand side of (\ref{eq:81}) can further be rewritten as
\begin{align}
& \frac{K{\sqrt{\genGam}}^{K-1}2^{R_{\rm BT}}}{{\sqrt{d_{\max}}}}{\rm Vol}(\mathcal{B}_{K}(1)) \times \left[ \underbrace{K}_{\rm I} + \underbrace{\sum^{\left\lfloor\frac{\sqrt{K}}{2}\right\rfloor}_{k=1}\frac{\left(k+1+\frac{\sqrt{K}}{2}\right)^{K}}{k^{K-1}}}_{\rm II} + \right. \nonumber \\
& \left. \underbrace{\sum^{\left\lfloor\sqrt{{\genGam d_{\max}}}\right\rfloor}_{k=\left\lfloor\frac{\sqrt{K}}{2}\right\rfloor+1}\frac{\left[\left(k+1+\frac{\sqrt{K}}{2}\right)^{K}-\left(k-\frac{\sqrt{K}}{2}\right)^{K}\right]}{k^{K-1}}}_{\rm III}
\right].
\label{eq:82}
\end{align}
We search for $c_1$ and $c_2$ (independent of $k$) such that
\begin{align}
\left(k+1+\frac{\sqrt{K}}{2}\right)^{K}\leq c_1 k^{K-1}
\label{eq:c1}
\end{align}
for $1\leq k \leq \left\lfloor\frac{\sqrt{K}}{2}\right\rfloor$, and
\begin{align}
\left[\left(k+1+\frac{\sqrt{K}}{2}\right)^{K}-\left(k-\frac{\sqrt{K}}{2}\right)^{K}\right]\leq c_2 k^{K-1}
\label{eq:c2}
\end{align}
for $k\geq1$, since it will then follow that
\begin{align}
II+III &\leq \sum^{\left\lfloor\sqrt{{\genGam d_{\max}}}\right\rfloor}_{k=1} \max(c_1,c_2) \nonumber \\
&={\left\lfloor\sqrt{{\genGam d_{\max}}}\right\rfloor} \max(c_1,c_2).
%\label{eq:133}
\end{align}
We note that since (again assuming $\left\lfloor\sqrt{{\genGam}{d_{\max}}}\right\rfloor\geq1$)
\begin{align}
K \leq \sum^{\left\lfloor\sqrt{{\genGam}{d_{\max}}}\right\rfloor}_{k=1} K,
\end{align}
it will thus follow that
\begin{align}
I+II+III &\leq  \sum^{\left\lfloor\sqrt{{\genGam}{d_{\max}}}\right\rfloor}_{k=1}  \left[K+\max(c_1,c_2)\right] \nonumber \\
&={\left\lfloor\sqrt{{\genGam}{d_{\max}}}\right\rfloor} \left[K+\max(c_1,c_2)\right].
%\label{eq:133}
\label{eq:102}
\end{align}

An explicit derivation for $c_1$ and $c_2$ appears in Appendix B of \cite{OutageBehaviorOfRandomlyPrecoded:Domanovitz2015} (where $2\Nt$ should be replaced with $K$), from which we obtain
\begin{align}
c_1&=\left(1+\sqrt{K}\right)^{K} \nonumber \\
c_2&=\left[\left(2+\frac{\sqrt{K}}{2}\right)^{K}-\left(1-\frac{\sqrt{K}}{2} \right)^{K}\right].
\end{align}
In is also shown in Appendix B of \cite{OutageBehaviorOfRandomlyPrecoded:Domanovitz2015}  that for $K\geq 4$, $c_2\leq c_1$ holds. For $1\leq K < 4$ we observe that  $c_1\leq c_2$. This is so since  $K<4$ implies that $1-\frac{\sqrt{K}}{2}>0$, and hence  indeed for $K<4$ we have
\begin{align}
    c_2 &= \left[\left(2+\frac{\sqrt{K}}{2}\right)^{K}-\left(1-\frac{\sqrt{K}}{2} \right)^{K}\right] \nonumber \\
    & = \left(1+ \sqrt{K}+1-\frac{\sqrt{K}}{2}\right)^{K}-\left(1-\frac{\sqrt{K}}{2} \right)^{K} \nonumber \\
    & \geq (1+\sqrt{K})^K \nonumber \\
    & = c_1.
\end{align}
Recalling now (\ref{eq:102}) and denoting
\begin{align}
    c_{\max}=
    \begin{cases}
    c_2 ~~~ 1\leq K < 4 \\
    c_1 ~~~ K\geq 4
    \end{cases},
\end{align}
it follows that
\begin{align}
    I+II+III &\leq {\left\lfloor\sqrt{{\genGam d_{\max}}}\right\rfloor} (K+c_{\max}).
    \label{eq:I+II}
\end{align}

Using \eqref{eq:I+II} and substituting the  volume of a unit ball ${\rm Vol}(\mathcal{B}_{K}(1))=\frac{\pi^{K/2}}{\Gamma(K/2+1)}$,  it follows that right hand side of (\ref{eq:81}) is  upper bounded by
\begin{align}
%&\leq \sum^{\lfloor\sqrt{\genGam d_{\max}}\rfloor}_{k=1}\frac{K{\sqrt{\genGam}}^{K-1}2^{R_{\rm BT}}}{{k^{K-1}}{\sqrt{d_{\max}}}}\left(1+\sqrt{K}\right)^K\frac{\pi^{K/2}}{\Gamma(K/2+1)} k^{K-1}\\
%&=\sum^{\lfloor\sqrt{\genGam d_{\max}}\rfloor}_{k=1}\frac{K{\sqrt{\genGam}}^{K-1}2^{R_{\rm BT}}}{{\sqrt{d_{\max}}}}\left(1+\sqrt{K}\right)^K\frac{\pi^{K/2}}{\Gamma(K/2+1)}\\
% \end{align}
% \begin{align}
% &<\int_0^{\sqrt{\genGam d_{\max}}}\frac{{\sqrt{\genGam}}^3{2^C}\pi^2 81}{{\sqrt{d_{\max}}}}dk\\
% \end{align}
% \begin{align}
&\frac{K{\sqrt{\genGam}}^{K-1}2^{R_{\rm BT}}}{{\sqrt{d_{\max}}}} \frac{\pi^{K/2}}{\Gamma(K/2+1)} \sqrt{\genGam d_{\max}} (K+c_{\max}) \\
% \end{align}
% \begin{align}
&=K{\sqrt{\genGam}}^{K}2^{R_{\rm BT}}(K+c_{\max})\frac{\pi^{K/2}}{\Gamma(K/2+1)}.
\label{eq:thm2NoD}
\end{align}
Finally, we substitute  $\genGam$, as defined in (\ref{eq:beta}), into \eqref{eq:thm2NoD} to obtain
\begin{align}
    &\sum_{{\bf a}\in{\mathbb{A}(\genGam,1/d_{\max},K)}}\prob\left(\|\svv{D}^{-1/2}\svv{U}^T{\bf a}\|<\sqrt{\genGam}\right)  \nonumber \\
    &\leq K\left(\alNt 2^{-\frac{2}{K}(R_{\rm BT}+\Delta R)}\right)^{K/2}2^{R_{\rm BT}}(K+c_{\max}) \nonumber \\
    &\leq K\left(\alNt \right)^{K/2} 2^{-(R_{\rm BT}+\Delta R)} 2^{R_{\rm BT}}(K+c_{\max}) \nonumber \\
    &\leq K\left(\alNt \right)^{K/2} (K+c_{\max}) 2^{-\Delta R} \nonumber \\
    &= c(K) 2^{-\Delta R},
\end{align}
where $c(K)$ is as defined in \eqref{eq:ckdef}.

\section{Proof of Lemma~\ref{lem:lem2} and Theorem \ref{thm:thm2}}
\label{app:proofOfThm2}
We first recall a theorem of Minkowski \cite[Theorem 1.5]{micciancio2012complexity}
that upper bounds the
product of the successive minima.
\begin{theorem}[Minkowski]
For any lattice $\Lambda(\svv{F}^T)$ that is spanned by a full rank $K\times K$ matrix $\svv{F}^T$
\begin{align}
    \prod_{m=1}^{K}\lambda_m^2(\svv{F}^T)\leq K^{K}(\det{\svv{F}^T})^2.
\end{align}
\label{thm:Minkowski}
\end{theorem}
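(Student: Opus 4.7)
The statement is Minkowski's second theorem on successive minima, a classical result in the geometry of numbers (cited by the paper as Micciancio--Goldwasser Theorem~1.5). My plan is to derive it via the following sequence of steps, passing through the sharp classical form and then loosening constants to obtain the stated $K^K$.

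First, I would select $K$ linearly independent lattice vectors $v_1, \ldots, v_K \in \Lambda(\svv{F}^T)$ realizing the successive minima, so that $\|v_i\| = \lambda_i(\svv{F}^T)$; these exist by the very definition of $\lambda_i$. After Gram--Schmidt-orthogonalizing to $v_1^*, \ldots, v_K^*$, for a scale parameter $t > 0$ I would introduce the ellipsoid
\[
E_t = \Bigl\{ x \in \mathbb{R}^K : \sum_{i=1}^K \frac{\langle x,\, v_i^*/\|v_i^*\|\rangle^2}{(t\lambda_i)^2} \leq 1 \Bigr\},
\]
whose Lebesgue volume equals $V_K \, t^K \prod_i \lambda_i$, with $V_K = \pi^{K/2}/\Gamma(K/2+1)$ denoting the unit-ball volume.

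The key geometric step is to show that $E_t$ is free of non-zero lattice points for every $t < 1$. A naive Gram--Schmidt triangular argument --- assume $0 \neq y \in \Lambda(\svv{F}^T) \cap E_t$, take $j$ maximal with $\langle y, v_j^*\rangle \neq 0$, and observe that $y, v_1, \ldots, v_{j-1}$ then constitute $j$ linearly independent lattice vectors of norm at most $\sqrt{j}\, t\, \lambda_j$, contradicting the definition of $\lambda_j$ whenever $t < 1/\sqrt{K}$ --- only yields the lattice-point-free property on the restricted range $t < 1/\sqrt{K}$. Extending to the sharp $t < 1$ requires the classical Minkowski deformation argument, in which one inflates the axes of $E_t$ one at a time and invokes Minkowski's first (convex body) theorem at each stage while tracking when new lattice points first appear. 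Passing $t \to 1^-$ in Minkowski's first theorem then gives $V_K \prod_i \lambda_i \leq 2^K \det(\svv{F}^T)$, equivalently the sharp form $\prod_i \lambda_i^2 \leq (4^K/V_K^2)\det(\svv{F}^T)^2$.

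The paper's stated bound $\prod_i \lambda_i^2 \leq K^K \det(\svv{F}^T)^2$ then follows from the elementary verification $4^K/V_K^2 \leq K^K$: this is an equality at $K=1$, immediate for small $K$ by direct computation (e.g.\ $4/\pi \leq \sqrt{2}$ for $K=2$), and for large $K$ Stirling gives $V_K \sim (2\pi e/K)^{K/2}/\sqrt{\pi K}$ so that $2^K/V_K \sim \sqrt{\pi K}(2K/(\pi e))^{K/2} = o(K^{K/2})$. The main technical obstacle is the deformation step establishing lattice-point-freeness of $E_t$ up to $t < 1$, which is the non-trivial core of Minkowski's second theorem. For the paper's weaker constant alone, a cleaner route avoiding deformation is to work with a Korkin--Zolotarev reduced basis $(b_i)$ of $\Lambda(\svv{F}^T)$, exploiting $\det(\svv{F}^T) = \prod_i \|b_i^*\|$, $\lambda_i \leq \|b_i\|$, together with the known bounds $\|b_i\|^2 \leq c_i \|b_i^*\|^2$ whose product is bounded by $K^K$.
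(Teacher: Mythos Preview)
The paper does not prove this statement at all: it is stated as a classical result and attributed to Micciancio--Goldwasser, Theorem~1.5, with no argument given. So there is nothing to compare your proposal against on the paper's side.

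Your sketch of the classical proof is essentially correct. The sharp form $\prod_i \lambda_i \leq 2^K V_K^{-1} |\det \svv{F}^T|$ via the ellipsoid $E_t$ and Minkowski's first theorem is the standard route, and you rightly flag the deformation argument (inflating axes one at a time) as the non-trivial core needed to push lattice-point-freeness from $t<1/\sqrt{K}$ up to $t<1$. Your relaxation $4^K/V_K^2 \leq K^K$ is also correct: equality at $K=1$, direct checks for small $K$, and for large $K$ the Stirling estimate you give shows the ratio $(2^K/V_K)/K^{K/2} \sim \sqrt{\pi K}\,(2/(\pi e))^{K/2} \to 0$. One should in principle close the gap between the small-$K$ checks and the asymptotic, but this is routine monotonicity bookkeeping. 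Your alternative Korkin--Zolotarev route is also valid and, as you note, avoids the deformation argument at the cost of importing the KZ basis bounds.
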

% \label{app:proofOfThm2}
To prove Theorem \ref{thm:thm2}, we further need the following two lemmas.
\begin{lemma} For a Gaussian source vector with covariance matrix $\svv{K_{xx}}\in\mathcal{K}(R_{\rm BT})$,
and for any full-rank integer matrix $\svv{A}$, the sum-rate of IF-SUC satisfies
\begin{align}
\sum_{m=1}^K R_{{\rm IF-SUC},m}(\svv{K}_{\bf x x};\svv{A})=R_{\rm BT}+\log|\det(A)|,
\end{align}
where $R_{{\rm IF-SUC},m}(\svv{K}_{\bf x x};\svv{A})$ is defined in \eqref{eq:R_IFSUC_m}.
\label{lem:appLem1}
\end{lemma}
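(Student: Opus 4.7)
The plan is a short, direct determinant calculation. Recall from the definition that $R_{{\rm IF-SUC},m}(\svv{K}_{\bf xx};\svv{A}) = \frac{1}{2}\log(\ell_{m,m}^2)$, where $\ell_{m,m}$ is the $m$th diagonal entry of the lower-triangular Cholesky factor $\svv{L}$ satisfying $\svv{A}(\svv{I}+\svv{K}_{\bf xx})\svv{A}^T = \svv{L}\svv{L}^T$. Summing over $m$ collapses the logs into
\begin{align}
\sum_{m=1}^{K} R_{{\rm IF-SUC},m}(\svv{K}_{\bf xx};\svv{A}) = \frac{1}{2}\log\prod_{m=1}^{K}\ell_{m,m}^2.
\end{align}

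Next I would use the fact that $\svv{L}$ is lower triangular, so $\det \svv{L} = \prod_{m=1}^{K}\ell_{m,m}$, which gives $\prod_m \ell_{m,m}^2 = (\det \svv{L})^2 = \det(\svv{L}\svv{L}^T)$. Applying multiplicativity of the determinant to $\svv{A}(\svv{I}+\svv{K}_{\bf xx})\svv{A}^T$, this equals $(\det\svv{A})^2 \det(\svv{I}+\svv{K}_{\bf xx})$.

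Finally, the membership $\svv{K}_{\bf xx} \in \mathcal{K}(R_{\rm BT})$ means precisely that $\frac{1}{2}\log\det(\svv{I}+\svv{K}_{\bf xx}) = R_{\rm BT}$, i.e.\ $\det(\svv{I}+\svv{K}_{\bf xx}) = 2^{2R_{\rm BT}}$. Substituting,
\begin{align}
\sum_{m=1}^{K} R_{{\rm IF-SUC},m}(\svv{K}_{\bf xx};\svv{A}) &= \tfrac{1}{2}\log\bigl((\det\svv{A})^2 \cdot 2^{2R_{\rm BT}}\bigr) \nonumber\\
&= R_{\rm BT} + \log|\det \svv{A}|,
\end{align}
which is the claimed identity. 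There is no real obstacle here: the lemma is a bookkeeping identity expressing that the diagonal of the Cholesky factor accounts for the total volume of $\svv{A}(\svv{I}+\svv{K}_{\bf xx})\svv{A}^T$, and the non-trivial work (if any) is simply to note that $\svv{A}$ being full-rank integer guarantees $\det\svv{A} \neq 0$ so that the Cholesky factor is non-singular and the logarithms are well-defined.
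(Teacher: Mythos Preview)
Your proof is correct and matches the paper's own proof essentially line for line: sum the $\frac{1}{2}\log(\ell_{m,m}^2)$ into $\frac{1}{2}\log\det(\svv{L}\svv{L}^T)$, identify this with $\frac{1}{2}\log\det\bigl(\svv{A}(\svv{I}+\svv{K}_{\bf xx})\svv{A}^T\bigr)$, and then use multiplicativity together with $\frac{1}{2}\log\det(\svv{I}+\svv{K}_{\bf xx})=R_{\rm BT}$. Your added remark about $\svv{A}$ being full rank so that the logs are well defined is a slight refinement, but otherwise there is no difference in approach.
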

\begin{proof}
\begin{align}
\sum_{m=1}^K R_{{\rm IF-SUC},m}(\svv{K}_{\bf x x};\svv{A})&=\frac{1}{2} \sum_{m=1}^K \log(\ell_{m,m}^2) \nonumber \\
& = \frac{1}{2}\log\left(\prod_{m=1}^K \ell_{m,m}^2 \right) \nonumber \\
& = \frac{1}{2}\log \det (\svv{L}\svv{L}^T) \nonumber \\
& = \frac{1}{2}\log \det \left(\svv{A}\left(\svv{I}+\frac{1}{d}\svv{K}_{\bf xx}\right)\svv{A}^T\right) \nonumber \\
& = R_{\rm BT} +\log|\det(A)|
\end{align}
\end{proof}

Theorem~3 in \cite{OrdentlichErezNazer:2013} shows that for successive IF (used for channel coding) there is no
loss (in terms of achievable rate) in restricting $\svv{A}$ to the class of unimodular matrices. We note that same claim  holds also in our framework (that of successive integer-source coding) by replacing $\svv{G}$, the matrix  spanning the lattice which was defined in Theorem 3 as
\begin{align}
\left(\svv{I}+\SNR\svv{H}\svv{H}^T\right)^{-1}=\svv{G}\svv{G}^T
\end{align}
with $\svv{F}$, as defined in (\ref{eq:def_F}), and noting that the optimal $\svv{A}$ can be expressed (in both cases) as
\begin{align}
\svv{A}^{\rm opt}_{\rm SIC/SUC}=\min_{\substack{\svv{A}\in\mathbb{Z}^{K\times K} \\ \det{\svv{A}}\neq0}}\max_{k=1,...,K} \ell_{k,k}^2,
\end{align}
where $\ell_{k,k}$ are the  diagonal elements of the corresponding $\svv{L}$ matrix (derived from the Cholesky decomposition) in each case.

Having established that the optimal $\svv{A}$ is unimodular, it now follows  that $\sum_{m=1}^K R_{{\rm IF-SUC},m}(\svv{K}_{\bf x x};\svv{A}^{\rm opt}_{\rm SUC}) = R_{\rm BT}$.

% For the second lemma we Minkowski's theorem:
% \begin{theorem}
% For any lattice $\Lambda(\svv{F})$ which is spanned by a full-rank $K\times K$ matrix $\svv{F}$
% \begin{align}
% \prod_{m=1}^K \lambda_m^2(F)\leq K^K |\det(F)|^2
% \end{align}
% \end{theorem}

We are now ready to prove the following lemma that is analogous to Theorem 3 in \cite{ordentlich2014approximate}.
First, denote by
\begin{align}
    R_{{\rm IF},k}(\svv{K}_{\bf x x};\svv{A})\triangleq \frac{1}{2}\log\left({\bf a}_k^T\left(\svv{I}+\svv{K}_{\bf x x}\right){\bf a}_k\right)
    \label{eq:comprate_m}
\end{align}

the effective rate that of the $k$'th equation as appears in the definition of the achievable rate of integer-forcing source coding in \eqref{eq:R_IF(A)}. Then, we have the following.
\begin{lemma}
For a Gaussian source vector with covariance matrix $\svv{K_{xx}}\in\mathcal{K}(R_{\rm BT})$,
and for the optimal integer matrix $\svv{A}^{\rm opt}$, the sum-rate of IF is upper bounded as
\begin{align}
\sum_{m=1}^K R_{{\rm IF},m}(\svv{K}_{\bf x x};\svv{A}^{\rm opt}) &\leq R_{\rm BT}+\frac{K}{2}\log(K).
\end{align}
\label{lem:appLem2}
\end{lemma}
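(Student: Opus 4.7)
The plan is to prove this by choosing $\svv{A}^{\rm opt}$ to be the integer matrix whose rows realize the successive minima of the lattice $\Lambda(\svv{F}^T)$, where $\svv{F}$ is the Cholesky factor of $\svv{I}+\svv{K}_{\bf xx}$ as in \eqref{eq:def_F}. Such a matrix is a legitimate candidate for $\svv{A}^{\rm opt}$ (i.e., it attains the minimax defining IF source coding) because the successive minima vectors ${\bf a}_1^*,\ldots,{\bf a}_K^*$ are linearly independent integer vectors with $\|\svv{F}^T{\bf a}_k^*\|=\lambda_k(\svv{F}^T)$, so
\begin{align}
\max_{k=1,\ldots,K}\|\svv{F}^T{\bf a}_k^*\|^2=\lambda_K^2(\svv{F}^T),
\end{align}
which matches the characterization of $R_{\rm IF}^{\rm opt}$ in \eqref{eq:RSumif}. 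Any other invertible integer matrix must contain at least one row whose image under $\svv{F}^T$ has norm at least $\lambda_K(\svv{F}^T)$ (by the definition of the $K$-th successive minimum), so this choice is indeed optimal in the minimax sense.

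Next, I would substitute this choice into the definition \eqref{eq:comprate_m} to obtain
\begin{align}
\sum_{m=1}^{K}R_{{\rm IF},m}(\svv{K}_{\bf xx};\svv{A}^{\rm opt})=\frac{1}{2}\sum_{m=1}^{K}\log\lambda_m^2(\svv{F}^T)=\frac{1}{2}\log\prod_{m=1}^{K}\lambda_m^2(\svv{F}^T).
\end{align}
The product of the squared successive minima is exactly the object controlled by Minkowski's second theorem (Theorem~\ref{thm:Minkowski}), which gives
\begin{align}
\prod_{m=1}^{K}\lambda_m^2(\svv{F}^T)\leq K^{K}\bigl(\det\svv{F}^T\bigr)^2=K^{K}\det(\svv{I}+\svv{K}_{\bf xx})=K^{K}\cdot 2^{2R_{\rm BT}},
\end{align}
where the last equality uses the definition of the compound class $\mathcal{K}(R_{\rm BT})$.

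Combining the two displays and taking $\tfrac{1}{2}\log$ on both sides yields
\begin{align}
\sum_{m=1}^{K}R_{{\rm IF},m}(\svv{K}_{\bf xx};\svv{A}^{\rm opt})\leq \frac{1}{2}\log\bigl(K^{K}2^{2R_{\rm BT}}\bigr)=R_{\rm BT}+\frac{K}{2}\log K,
\end{align}
which is exactly the claimed bound. There is no substantial obstacle; the only subtle point, which I would flag, is that $\svv{A}^{\rm opt}$ denotes a \emph{specific} minimizer of the minimax (the one built from successive-minima vectors), since distinct optimal matrices could in principle differ in their per-row norms but not in their maximum; this is the choice used implicitly elsewhere in the excerpt via the characterization in \eqref{eq:RSumif}, and it is the one that makes the sum-rate statement meaningful.
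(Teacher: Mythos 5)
Your proposal is correct and follows essentially the same route as the paper: identify the per-row rates of $\svv{A}^{\rm opt}$ with $\tfrac{1}{2}\log\lambda_m^2(\svv{F}^T)$ and then bound the product of successive minima via Minkowski's second theorem (Theorem~\ref{thm:Minkowski}), using $\det(\svv{I}+\svv{K}_{\bf xx})=2^{2R_{\rm BT}}$. The only difference is that you make explicit (and correctly justify) the identification of $\svv{A}^{\rm opt}$ with the successive-minima matrix, which the paper asserts without comment.
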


\begin{proof}
\begin{align}
\sum_{m=1}^K R_{{\rm IF},m}(\svv{K}_{\bf x x};\svv{A}^{\rm opt})&=\sum_{m=1}^K \frac{1}{2} \log\left(\lambda_m^2(\svv{F}^T)\right) \nonumber \\
& = \frac{1}{2} \log\left( \prod_{m=1}^K \lambda_m^2(\svv{F}^T)\right) \nonumber \\
& \leq \frac{1}{2} \log\left(  K^K |\det(F)|^2 \right) \nonumber \\
& = R_{\rm BT} + \frac{K}{2} \log\left(K\right).
\end{align}
where the inequality is due to  Theorem~\ref{thm:Minkowski} (Minkowski's Theorem).
\end{proof}

Now, for the case of two sources we have by Lemma~\ref{lem:appLem1}
\begin{align}
R_{{\rm IF-SUC},1}(\svv{K}_{\bf x x};\svv{A}^{\rm opt}_{\rm SUC})+R_{{\rm IF-SUC},2}(\svv{K}_{\bf x x};\svv{A}^{\rm opt}_{\rm SUC})=R_{\rm BT},
\label{eq:108}
\end{align}
or equivalently
\begin{align}
R_{{\rm IF-SUC},2}(\svv{K}_{\bf x x};\svv{A}^{\rm opt}_{\rm SUC})=R_{\rm BT}-R_{{\rm IF-SUC},1}(\svv{K}_{\bf x x};\svv{A}^{\rm opt}_{\rm SUC}).
\label{eq:109}
\end{align}
We further have by lemma~\ref{lem:appLem2} that
\begin{align}
R_{{\rm IF},1}(\svv{K}_{\bf x x};\svv{A}^{\rm opt})+R_{{\rm IF},2}(\svv{K}_{\bf x x};\svv{A}^{\rm opt})\leq R_{\rm BT}+1.
\label{eq:110}
\end{align}
 We note that the (optimal) integer matrix $\svv{A}^{\rm opt}$ used for IF
in \eqref{eq:110} is in general different than the (optimal) matrix $\svv{A}^{\rm opt}_{\rm SUC}$ used for IF-SUC in \eqref{eq:108}-\eqref{eq:109}.
Nonetheless, when applying IF-SUC, one decodes first the equation with the \emph{lowest} rate. Since for this equation SUC has no effect, it follows that the first row of $\svv{A}$ is the same in both cases and hence
\begin{align}
R_{{\rm IF-SUC},1}(\svv{K}_{\bf x x};\svv{A}^{\rm opt}_{\rm SUC})=R_{{\rm IF},1}(\svv{K}_{\bf x x};\svv{A}^{\rm opt}).
\end{align}
Since source $1$ is decoded first, it follows that $R_{{\rm IF},1}>R_{{\rm IF},2}$ and hence
\begin{align}
R_{{\rm IF},1}(\svv{K}_{\bf x x};\svv{A}^{\rm opt})\leq\frac{R_{\rm BT}+1}{2}.
\end{align}
Therefore,
\begin{align}
R_{\rm IF-SUC}^{\rm opt}(\svv{K}_{\bf x x})&=2\max \left(R_{{\rm IF-SUC},1}(\svv{K}_{\bf x x};\svv{A}^{\rm opt}_{\rm SUC}),R_{{\rm IF-SUC},2}(\svv{K}_{\bf x x};\svv{A}^{\rm opt}_{\rm SUC}) \right) \nonumber \\
& \leq 2 \max \left(\frac{R_{\rm BT}+1}{2},R_{\rm BT}-R_{{\rm IF},1}(\svv{K}_{\bf x x};\svv{A}^{\rm opt})\right) \nonumber \\
& = \max \left(R_{\rm BT}+1,2R_{\rm BT}-2R_{{\rm IF},1}(\svv{K}_{\bf x x};\svv{A}^{\rm opt})\right).
\end{align}
% Now, assuming we consider gap values $\Delta R$  that are greater than $1$, our goal is to bound
Henceforth, we analyze the outage for $R_{\rm BT}>1$ and target rates that are no smaller than $R_{\rm BT}+1$, so that the inequality $2R_{\rm BT}-2R_{{\rm IF},1}(\svv{K}_{\bf x x})>R_{\rm BT}+1$ is satisfied. Thus, we consider excess rate values satisfying $\Delta R>1$. Our goal is to bound
\begin{align}
&\prob \left(R_{\rm IF-SUC}^{\rm opt}(\svv{K}_{\bf x x})\geq R_{\rm BT}+\Delta R \right)  \nonumber \\
&=\prob \left(2R_{\rm BT}-2R_{{\rm IF},1}(\svv{K}_{\bf x x};\svv{A}^{\rm opt}_{\rm SUC})\geq R_{\rm BT}+\Delta R \right) \nonumber \\
&=\prob \left(R_{{\rm IF},1}(\svv{K}_{\bf x x};\svv{A}^{\rm opt})\leq \frac{R_{\rm BT}-\Delta R}{2} \right) \nonumber \\
&=\prob \left(\frac{1}{2}\log(\lambda_1^2(\svv{F}^T)) \leq \frac{R_{\rm BT}-\Delta R}{2} \right) \nonumber \\
&=\prob \left(\lambda_1^2(\svv{F}^T)\leq 2^{R_{\rm BT}-\Delta R} \right)
\label{eq:PrLambda1Implicit-Comp_SUC}
\end{align}

Let $\genGam=2^{R_{\rm BT}-\Delta R}$. We wish to bound~(\ref{eq:PrLambda1Implicit-Comp_SUC}), or equivalently
\begin{align}
\prob \left(\lambda_1^2(\svv{F}^T)\leq \genGam \right)=\prob \left(\lambda_1(\svv{F}^T)\leq \sqrt{\genGam} \right)
\end{align}
for a given matrix $\svv{D}$ corresponding to $\svv{K}_{\bf x x}$ (via the relation \eqref{eq:KDconnection}). Note that the event $\lambda_1(\Lambda)<\sqrt{\genGam}$ is equivalent to the event
\begin{align}
\bigcup_{\bf{a}\in{\mathbb{Z}^{K}\setminus\{{\bf 0}\}}}||\svv{D}^{1/2}\svv{U}^T\bf{a}||<\sqrt{\genGam}.
\end{align}
Applying the union bound yields
\begin{align}
\prob\left(\lambda_1(\Lambda)<\sqrt{\genGam}\right) \leq  \sum_{{\bf a}\in {\mathbb{Z}^{K} \setminus\{{\bf 0}\} }}\prob\left(||\svv{D}^{1/2}\svv{U}^T{\bf a}||<\sqrt{\genGam}\right).
\end{align}
Using the same derivation as in Lemma~2 in \cite{OutageBehaviorOfRandomlyPrecoded:Domanovitz2015}, we get
\begin{align}
&\prob\left(R_{\rm IF-SUC}^{\rm opt}(\svv{D},\svv{U})\geq R_{\rm BT}+\Delta R\right)
\nonumber \\
& \leq  \sum_{{\bf a}\in\mathbb{A}(\genGam,d_{\min},K)}\frac{K\sqrt{\genGam}^{K-1}}{\|{\bf a}\|^{K-1}2^{R_{\rm BT}}\frac{1}{\sqrt{d_{\min}}}}.
\end{align}
Since we are analyzing the case of two sources, we have
\begin{align}
& \prob\left(R^{\rm opt}_{\rm  IF-SUC}(\svv{D},\svv{U})\geq R_{\rm BT}+\Delta R\right)
 \nonumber \\
& \leq \sum_{{\bf a}\in\mathbb{A}(\genGam,d_{\min},K)}\frac{2\sqrt{\genGam}}{\|{\bf a}\|2^{R_{\rm BT}}\frac{1}{\sqrt{d_{\min}}}}.
\end{align}
This establishes Lemma~\ref{lem:lem2}.

Applying a similar argument as appears in Appendix~\ref{app:proofOfThm1} (as part of the proof of Theorem~\ref{thm:thm1}), and noting that $K<4$ implies that $c_{\max}=c_2$, we get
\begin{align}
\prob\left(R^{\rm opt}_{\rm  IF-SUC}(\svv{D},\svv{U})\geq R_{\rm BT}+\Delta R)\right) \nonumber \\
\leq  \frac{2 \genGam}{2^{R_{\rm BT}}}\left(2+3+3\sqrt{2}\right)\pi.
\end{align}

Finally, substituting $\genGam$ as defined in \eqref{eq:beta}, we obtain
\begin{align}
& \prob\left(R^{\rm opt}_{\rm  IF-SUC}(\svv{D},\svv{U})\geq R_{\rm BT}+\Delta R\right)
 \nonumber \\
& \leq \frac{2\pi\cdot2^{R_{\rm BT}-\Delta R}}{2^{R_{\rm BT}}}\left(2+3+3\sqrt{2}\right) \nonumber \\
& =2\pi \left(5+3\sqrt{2}\right) 2^{-\Delta R}.
\end{align}

\ifCLASSOPTIONcaptionsoff
  \newpage
\fi

\section{Additive (Worst-Case) Bound for NVD Space-Time Precoded  Sources}
\label{app:additiveBound}
%We first recall Minkowski's theorem.
% \begin{theorem}
% For any lattice $\Lambda(\svv{F}^T)$ that is spanned by a full rank $K\times K$ matrix $\svv{F}^T$
% \begin{align}
%     \prod_{m=1}^{K}\lambda_m^2(\svv{F}^T)\leq K^{K}(\det{\svv{F}^T})^2.
% \end{align}
% \label{thm:Minkowski}
% \end{theorem}
% Now since we have
% \begin{align}
%     \prod_{m=1}^{K}\lambda_m^2(\svv{F}^T)\geq \lambda_{K}^2(\svv{F}^T)\lambda_{1}^{2(K-1)}(\svv{F}^T),
% \end{align}
% we get that
% \begin{align}
%     \lambda_{K}^2(\svv{F}^T)\leq K^{K}(\det{\svv{F}^T})^2 \frac{1}{\lambda_{1}^{2(K-1)}(\svv{F}^T)}.
%     \label{eq:MinkBound}
% \end{align}

Combining space-time precoding and integer forcing in the context of channel coding was suggested in \cite{Domanovitz:2014}, as we next briefly recall. We then present the necessary modifications for the case of source coding.

We derive below an additive bound using a unitary precoding matrix satisfying a non-vanishing determinant (NVD) property. As the theory of NVD space-time codes has been developed over the complex field, it
will prove convenient for us to employ complex precoding matrices. To this end, we may assume that we stack samples from two time slots where the samples stacked at the first time slot represent the real part of a complex number and the samples stacked at the second time slot represent the imaginary part of a complex number. Hence, we have
\begin{align}
\svv{K}_{\hat{x}\hat{x}}&=\svv{I}_{2\times 2} \otimes \svv{K}_{x x},
\end{align}
where $\otimes$ is the Kronecker product. We note that the Berger-Tung benchmark of this stacked source vector is
\begin{align}
R_{\rm BT}(\svv{K}_{\hat{x}\hat{x}})=2R_{\rm BT}(\svv{K}_{{x}{x}}).
\label{eq:R_btComplex}
\end{align}

Next, in order to allow space-time precoding, we stack $T$ times the $K$ ``complex'' outputs of the $K$ sources and let $\bar{\boldsymbol{x}}_c\in \mathbb{R}^{2 K T \times 1}$ denote the effective source vector. Its correlation matrix, which  we refer to as the effective covariance matrix, takes the form
\begin{align}
    \mathcal{K}&=\svv{I}_{T\times T} \otimes \svv{K}_{\hat{x} \hat{x}}.
    \label{eq:stackedSource}
\end{align}

We assume a precoding matrix,  that in principle can be either deterministic or random, is applied to the effective source vector.
We analyze performance for the case where the precoding matrix $\svv{P}_{st,c}$ is deterministic, specifically a precoding matrix induced by a perfect space-time block code, operating on the stacked source $\bar{\boldsymbol{x}}_c$ having covariance matrix as given in (\ref{eq:stackedSource}). An explanation on how to extract the precoding matrix from a space-time code can be found in Section IV in \cite{ordentlich2015precoded}.

We denote the corresponding precoding matrix over the reals as $\svv{P}_{st} \in \mathbb{R}^{2K T \times 2K T}$. We denote
\begin{align}
    \svv{I}+\svv{P}_{st}\mathcal{K}\svv{P}_{st}^H=\mathcal{F}\mathcal{F}^T.
\end{align}

As we assume that the precoding matrix is unitary, the Berger-Tung benchmark (normalized by the total number of time extensions used) remains unchanged, i.e.,
\begin{align}
    \frac{1}{2T}\frac{1}{2}\log\det\left(\svv{I}+\svv{P}_{st}\mathcal{K}\svv{P}_{st}^H\right)&=\frac{1}{2T}\frac{1}{2}\log\left(\det\mathcal{F}^T\right)^2 \nonumber \\
    &=R_{\rm BT}.
\end{align}
% When using a given space-time precoding ensemble, the WC scheme outage is defined as
% \begin{align}
%      %\prob_{\rm OPT, OL, out}(\SNR)\triangleq\prob\left[\log_2\left|\svv{I}_{\Nt\times \Nt}+\frac{\SNR}{\Nt}\svv{H}\svv{H}^H\right|<R(\SNR)\right].
%      %\prob_{\rm Outage}\leq\prob_{\forall\svv{H}\in\mathbb{H},\svv{P}}\left[R_{\svv{P}}<R\right].
%      &P^{\rm WC,scheme}_{\rm out}(R_{\rm BT},R)=\nonumber \\
%      &\sup_{\svv{K}_{\bf xx}\in\mathcal{K}(R_{\rm BT})}\prob\left(\frac{1}{T}R_{\rm scheme}(\mathcal{K}_{\bar{x}\bar{x}},\svv{P}_{st,c})>R\right),
%      \label{eq:P_ST_WC_OUT}
% \end{align}

As noted above, we assume that the generating matrix of a perfect code \cite{oggier2006perfect,ExplicitSpaceTimeCodesAchievingtheDiversity:Elia2006} is employed as a precoding matrix.

%A space-time code $\mathcal{C}_{\infty}^{\rm ST}$ is called perfect if
A space-time code is called perfect if:
\begin{itemize}
\item
It is full rate;
\item
It satisfies the non-vanishing determinant (NVD) condition;
% \begin{align}
% \delta_{\min}\triangleq \inf_{0\neq\svv{X}\in \mathcal{C}_{\infty}^{\rm ST}} |\det(\svv{X})|^2>0;
% \end{align}
\item
The code's generating matrix is unitary.
\end{itemize}

Let $\delta_{\min}$ denote the minimal non-vanishing determinant of this code.
Such codes further use the minimal number of time extensions possible, i.e., $T=K$.
Thus, we have a total of $K^2$ stacked complex samples. Subsisting $2K^2$ as the dimension (number of real samples jointly processed) in  (\ref{eq:RSumif}), the rate of IF source coding for
the time-stacked samples is given by
\begin{align}
R_{\rm IF}^{\rm opt}(\mathcal{K},\svv{P}_{st})=(2K^2)\frac{1}{2}\log\left(\lambda_{2K^2}^2(\mathcal{F}^T)\right).
\end{align}

To bound $\lambda_{2K^2}^2(\mathcal{F}^T)$, we note that for every $2K^2$-dimensional lattice, we have
\begin{align}
\lambda_1^{2(2K^2-1)}(\mathcal{F}^T)\lambda_{2K^2}^2(\mathcal{F}^T)\leq \prod_{m=1}^{2K^2}\lambda_{m}^2(\mathcal{F}^T).
\end{align}
Using Minkowski's theorem (Theorem \ref{thm:Minkowski}
 in Appendix~\ref{app:proofOfThm2}), it follows that
 \begin{align}
 \lambda_{2K^2}^2(\mathcal{F}^T)\leq (2K^2)^{2K^2}(\det{\mathcal{F}^T})^2 \frac{1}{\lambda_{1}^{2(2K^2-1)}(\mathcal{F}^T)}.
 \end{align}

 Hence, the rate of IF source coding (normalized by the number of time extensions) can be bounded as:
{\small
\begin{align}
    &\frac{1}{2K}R_{\rm IF}^{\rm opt}(\mathcal{K},\svv{P}_{st}) \nonumber \\
    &= \frac{1}{2K}(2K^2)\frac{1}{2}\log\left(\lambda_{2K^2}^2(\mathcal{F}^T)\right) \nonumber \\
    & \leq\frac{K}{2} \log\left((2K^2)^{2K^2}(\det{\mathcal{F}^T})^2 \frac{1}{\lambda_{1}^{2(2K^2-1)}(\mathcal{F}^T)}\right)  \nonumber \\
    & = K^3\log(2K^2)+\frac{K}{2}\log(\det({\mathcal{F}^T})^2)-\frac{K(2K^2-1)}{2}\log(\lambda_{1}^{2}(\mathcal{F}^T)) \nonumber \\
    & = K^3\log(2K^2)+2K^2R_{\rm BT} -\frac{K(2K^2-1)}{2}\log(\lambda_{1}^{2}(\mathcal{F}^T)).
    \label{eq:SubMink}
\end{align}
}%

We next use the results derived in  \cite{ordentlich2015precoded} for channel coding (using NVD precoding). We note that since the covariance matrix is positive semi-definite, it may be written as
\begin{align}
\svv{K}_{{x}{x}}=\svv{H}\svv{H}^T.
\end{align}
The covariance matrix of the stacked source vector may be written as
\begin{align}
\svv{K}_{\hat{x}\hat{x}}=\svv{\hat{H}}\svv{\hat{H}}^T,
\end{align}
where we may take $\svv{\hat{H}}=\svv{I}_{2\times 2}\otimes\svv{H}$.

There are many such choices of $\svv{H}$ and any such choice corresponds to a channel matrix $\svv{\hat{H}}$ that can be viewed as the real representation of a complex channel matrix (which in the present case is real, i.e., has no imaginary part) in the context of \cite{ordentlich2015precoded}.
The effective covariance matrix can similarly be rewritten as
\begin{align}
\mathcal{K}=\mathcal{H}\mathcal{H}^T,
\end{align}
where $\mathcal{H}=\svv{I}_{T\times T} \otimes
\svv{\hat{H}}$.
%where $\mathcal{H}_{x x}$ can be viewed as channel %matrix.

Using the channel coding terminology
of \cite{ordentlich2015precoded}, we further define the minimum distance at the receiver $d_{\rm min}(\svv{H},L)$ as
\begin{align}
d_{\rm min}(\svv{H},L)\triangleq \min_{\bf{a}\in {\rm QAM}^{K}(L) \backslash 0} \|\svv{H}{\bf a}\|,
\end{align}
where
\begin{align}
{\rm QAM}(L)\triangleq & \{-L,-L+1,\ldots,L-1,L\} \nonumber \\
                    & +i\{-L,-L+1,\ldots,L-1,L\}.
\end{align}

% [THE REST OF THE SECTION NEEDS TO BE REWRITTEN. WE CANNOT USE THE WORK CHANNEL NOR SNR AS THEY HAVE NOT BEEN DEFINED. RATHER, ON SHOULD REWRITE THE RESULTS OF [14] IN TERMS OF QUANTITIES THAT ARE DEFINED IN THE PRESENT PAPER]
% In \cite{OrdentlichErez:IFUniversallyAchievesCapacityUpToGap:2013} a bound for $\lambda_1^2(\mathcal{F}^T)$ was derived. Theorem 2 in \cite{OrdentlichErez:IFUniversallyAchievesCapacityUpToGap:2013} states that for $\mathcal{H}\in\mathbb{R}^{2K^2\times 2K^2}$ (which is the real representation of $\mathcal{H}\in\mathbb{C}^{K^2\times K^2}$ which in turn is the time-extended equivalent channel of $\svv{H}_c\in\mathcal{C}^{K\times K}$)
% \begin{align}
%     \SNR_{\rm eff}&\geq\frac{1}{4K^4}\min_{{\bf a}\in\mathbb{Z}^{2K}\setminus 0}{\bf a}^T\left(\svv{I}+\svv{P}_{st}^H\mathcal{H}^H\mathcal{H}\svv{P}_{st}\right){\bf a}
% \end{align}
Setting $\SNR=1$ and for $\svv{\hat{H}}\in\mathbb{R}^{2K\times2K}$ (which is the real representation of $\svv{H}_c\in\mathbb{C}^{K\times K}$), Lemma 2 in \cite{ordentlich2015precoded} states that
\begin{align}
\frac{1}{4K^4}\min_{{\bf a}\in\mathbb{Z}^{2K}\setminus 0}{\bf a}^T\left(\svv{I}+\svv{P}_{st}^H\mathcal{H}^H\mathcal{H}\svv{P}_{st}\right){\bf a}
  \nonumber \\
 \geq \frac{1}{4K^4}\min_{L=1,2,\cdots} \left(L^2+\SNR d_{\min}(\mathcal{H}\svv{P}_{st},L)\right).
 \end{align}
 Using Corollary~1 in \cite{ordentlich2015precoded}, we get
\begin{align}
 &\frac{1}{4K^4}\min_{{\bf a}\in\mathbb{Z}^{2K}\setminus 0}{\bf a}^T\left(\svv{I}+\svv{P}_{st}^H\mathcal{H}^H\mathcal{H}\svv{P}_{st}\right){\bf a}
  \nonumber \\
 & \geq\frac{1}{4K^4}\min_{L=1,2,\cdots} \left(L^2+\left[\delta_{\min}^\frac{1}{K}2^\frac{C_{\rm WI}}{K}-2K^2L^2\right]^{+}\right) \nonumber \\
 & \geq \frac{1}{4K^4}\min_{L=1,2,\cdots} \left(L^2+\left[\frac{\delta_{\min}^\frac{1}{K}2^\frac{C_{\rm WI}}{K}}{2K^2}-L^2\right]^{+}\right) \nonumber \\
 & \geq \frac{1}{8K^6} \delta_{\min}^\frac{1}{K}2^\frac{C_{\rm WI}}{K},
\end{align}
% and using Theorem 4 in \cite{OrdentlichErez:IFUniversallyAchievesCapacityUpToGap:2013} we can derive that
% {\small
% \begin{align}
% \frac{1}{4K^4}\min_{L=1,2,\cdots} \left(L^2+\SNR d_{\min}(\mathcal{H}\svv{P}_{st},L)\right) \geq
% \frac{1}{8K^6}\delta_{\min}^{\frac{1}{K}}2^{\frac{C_{\rm WI}}{K}}
% \end{align}
% }%
where $C_{\rm WI}=\frac{1}{2}\log\det\left(\svv{I}+\svv{H}^H\svv{H}\right)$ is the mutual information of $\svv{H}$. Since $C_{\rm WI}$ is the rate of a $2K\times 2K$ real matrix (resulting from a $K\times K$ complex matrix), it equals $R_{\rm BT}(\svv{K}_{\hat{x}\hat{x}})$ defined in (\ref{eq:R_btComplex}). Hence, we obtain
\begin{align}
\min_{{\bf a}\in\mathbb{Z}^{2K}\setminus 0}{\bf a}^T\left(\svv{I}+\svv{P}_{st}^H\mathcal{H}^H\mathcal{H}\svv{P}_{st}^H\right){\bf a}\geq\frac{1}{2K^2}\delta_{\min}^{\frac{1}{K}}2^{\frac{2R_{\rm BT}}{K}}
\end{align}
which in turn yields
\begin{align}
\lambda_{1}^{2}(\mathcal{F}^T)&=\min_{{\bf a}\in\mathbb{Z}^{2K}\setminus 0}{\bf a}^T\left(\svv{I}+\svv{P}_{st}\mathcal{H}\mathcal{H}^T\svv{P}_{st}^T\right){\bf a} \nonumber \\
&\geq \frac{1}{2K^2}\delta_{\min}^{\frac{1}{K}}2^{\frac{2R_{\rm BT}}{K}}.
\label{eq:boundForShortest}
\end{align}
% where we substitute $2R_{\rm BT}$ instead of $C_{WI}$ since $C_{WI}$ is the rate of $K\times K$ complex matrix, or equivalently $2K\times 2K$ real matrix where $R_{\rm BT}$ is the rate of $K \times K$ real matrix.

Finally, plugging the bound (\ref{eq:boundForShortest}) into (\ref{eq:SubMink}), we arrive at
\begin{align}
&\frac{1}{2K}R_{\rm IF}^{\rm opt}(\mathcal{K},\svv{P}_{st})  \nonumber \\
& \leq 2K^2 R_{\rm BT}+K^3\log(2K^2)+\frac{K(2K^2-1)}{2}\log(2K^2) \nonumber \\
&+\frac{K(2K^2-1)}{2K}\log(\frac{1}{\delta_{\min}})-\frac{K(2K^2-1)}{2K} 2R_{\rm BT}
\nonumber \\
& \leq R_{\rm BT}+2K^3\log(2K^2)+K^2\log\frac{1}{\delta_{\min}}.
%&\leq R_{\rm BT}+4K^2\log(2K^2)+2K\log\frac{1}{\delta_{\min}}
\end{align}

\section{Proof of Lemma~\ref{lem:lem3}}
\label{app:proofOfLemma3}
A Gaussian source component with a specific rate $R_i$ can be transformed to a different Gaussian source with rate $R_i^{\Delta}$ by appropriate scaling.
%applying a linear scale factor on the Gaussian input vector.
Specifically, scaling each source component $i$ by
\begin{align}
    \alpha_i=\sqrt{\frac{2^{2R_i^{\Delta}}-1}{2^{2R_i}-1}}
    \label{eq:scaleFactor}
\end{align}
results in parallel (uncorrelated) sources
\begin{align}
\hat{{x}_i}=\alpha_i{x}_i
\end{align}
with variances
\begin{align}
\hat{{s}}_i^2=\alpha_i^2{s}_i^2.
\end{align}
By (\ref{eq:R_S}), we therefore have
\begin{align}
    \hat{R}_i & =\frac{1}{2}\log\left(1+\hat{s}_i^2\right) \nonumber \\
              & =\frac{1}{2}\log\left(1+\alpha_i^2 s_i^2\right) \nonumber \\
              & =\frac{1}{2}\log\left(1+\frac{2^{2R_i^{\Delta}}-1}{2^{2R_i}-1} s_i^2\right) \nonumber \\
              & =\frac{1}{2}\log\left(1+\frac{2^{2R_i^{\Delta}}-1}{2^{2R_i}-1} (2^{2R_i}-1)\right) \nonumber \\
              & =\frac{1}{2}\log\left(1+2^{2R_i^{\Delta}}-1\right) \nonumber \\
              & = R_i^{\Delta}.
\end{align}

We associate with any rate tuple $(R_1, R_2, \ldots, R_M)\in \mathcal{R}(R_{\rm BT})$ a   rate tuple $(R_1^{\Delta}, R_2^{\Delta}, \ldots, R_M^{\Delta}) \in \mathcal{R}^{\Delta}(R_{\rm BT})$, according to  the following transformation
\begin{align}
    R_i^{\Delta}&=\left\lceil\frac{R_i}{\Delta R_{\rm BT}}\right\rceil\cdot \Delta R_{\rm BT} , \quad i=1,2,\ldots,K-1, \nonumber \\
    R_K^{\Delta}&=R_K-\sum_{i=1}^{K-1}(R_i^{\Delta}-R_i).
    \label{eq:rateScaleEq}
\end{align}

%We note that since for $i=1,2,\ldots,K-1$ we have $R_i^{\Delta}<R_i$ hence it follows that $\alpha_i\leq1$.
%We note that since in this process we have $R_i^{\Delta}<R_i$, the layer with the highest rate remains $R_K$.
For $i=K$  we have
\begin{align}
    R_K^{\Delta} \geq R_K-(K-1)\Delta R_{\rm BT}.
\end{align}
It follows that the scaling factor needed to achieve $R_K^{\Delta}$ is bounded by
\begin{align}
\alpha_K^2&={\frac{2^{2R_K^{\Delta}}-1}{2^{2R_K}-1}}\nonumber \\
&\geq \frac{2^{2\left(R_K-(K-1)\Delta R_{\rm BT}\right)}-1}{2^{2R_K}-1} \nonumber \\
&\geq \frac{2^{2\left(\frac{R_{\rm BT}}{K}-(K-1)\Delta R_{\rm BT}\right)}-1}{2^{2\left(\frac{R_{\rm BT}}{K}\right)}-1}. \label{eq:66}
%\\
%&\triangleq \eta.
%\label{eq:67}
\end{align}
where (\ref{eq:66}) follows since it is readily verified that the
function $\frac{ax-1}{x-1}$ is monotonically increasing in $x$,
for $x\geq 1$ and $0 \leq a \leq 1$.
% To show that (\ref{eq:66}) holds we note that it equals to
% \begin{align}
%     & \left(2^{\frac{R_{\rm BT}}{K}}-1\right)\left(2^{R_K-(K-1)\Delta R_{\rm BT}}-1\right)\geq \nonumber \\
%     & \left(2^{R_K}-1\right)\left(2^{\frac{R_{\rm BT}}{K}-(K-1)\Delta R_{\rm BT}}-1\right), \nonumber
% \end{align}
% which in turn, equals
% \begin{align}
%     & 2^{R_K+\frac{R_{\rm BT}}{K}-(K-1)\Delta R_{\rm BT}}-2^{\frac{R_{\rm BT}}{K}}-2^{R_K-(K-1)\Delta R_{\rm BT}}+1 \geq \nonumber \\
%     & 2^{\frac{R_{\rm BT}}{K}+R_K-(K-1)\Delta R_{\rm BT}}-2^{R_K}-2^{\frac{R_{\rm BT}}{K}-(K-1)\Delta R_{\rm BT}}+1 \nonumber
% \end{align}
% which can be written as
% \begin{align}
%     2^{R_K}\left(1-2^{-(K-1)\Delta R_{\rm BT}}\right)&\geq 2^{\frac{R_{\rm BT}}{K}}\left(1-2^{-(K-1)\Delta R_{\rm BT}}\right) \label{eq:69} \\
%     2^{R_K}&\geq2^{\frac{R_{\rm BT}}{K}}.
%     \label{eq:70}
% \end{align}
% we note that (\ref{eq:69}) holds since $2^{-(K-1)\Delta R_{\rm BT}}<1$ for any $\Delta R_{\rm BT}>0$ and (\ref{eq:70}) holds since we assumed \mbox{$R_1\leq R_2\leq \ldots \leq R_K$} and thus $R_K\geq \frac{R_{\rm BT}}{K}$.
% Now, denote $\eta=\frac{1}{\alpha_k}$. We note that
% \begin{align}
%     \widetilde{\svv{S}^{\Delta}}&=\svv{I}+\eta^2\svv{K}_{x,x}^{\Delta} \nonumber \\
%     &=\svv{I}+\frac{1}{\alpha_k^2}\svv{K}_{x,x}^{\Delta}
% \end{align}
% which means that
% \begin{align}
%     \tilde{d}_{i,i}^{\Delta}=
%     \begin{cases}
%     \frac{1}{\alpha_k^2} d_{i,i} ~~~ 1\leq i \leq K-1 \\
%     d_{K,K} ~~~ i=K
%     \end{cases}
% \end{align}
Denoting $\eta=\frac{1}{\alpha_k}$, it follows from (\ref{eq:66}) that
\begin{align}
    \eta^2=\frac{2^{2\left(\frac{R_{\rm BT}}{K}\right)}-1}{2^{2\left(\frac{R_{\rm BT}}{K}-(K-1)\Delta R_{\rm BT}\right)}-1}.
% \label{eq:eta}
\end{align}
% Assume we apply $\eta$ as a scale factor for the quantized covariance matrix we have
% %We define $\tilde{\bf x}^{\Delta}=\eta{\bf x}^{\Delta}$ as an (artificial) input vector whose $K$'th source has the same variance as the $K$'th source in the original vector.
% \begin{align}
% \widetilde{\svv{S}^{\Delta}}=\eta^2{\svv{S}^{\Delta}}.
% \end{align}

Now for $1\leq i \leq K-1$ we have by (\ref{eq:rateScaleEq}) that $R_i^{\Delta}\geq R_i$. Hence, for such $i$ it trivially holds that
(since $\eta \geq 1$)
\begin{align}
s_i^2 & \leq  s_{i,\Delta}^2 \\
      & \leq  \eta^2 s_{i,\Delta}^2.
\end{align}
Thus, the lemma follows by observing that from (\ref{eq:66}) it follows
that $s_K^2 \leq \eta^2 s_{K,\Delta}^2$ as well.

\section{Proof of Lemma~\ref{lem:lem4}}
\label{app:proofOfLemma4}

% Denoting
% \begin{align}
%     {\bf v}={\bf a}^T\svv{P},
% \end{align}
Recalling (\ref{eq:defOfv}), we note that (\ref{eq:quadratic}) can be written as
\begin{align}
    R_{{\rm IF},k}(\svv{S},\svv{P};\svv{A})=\frac{1}{2}\log\left(\sum_{i=1}^K {v}_i^2(1+s_i^2)\right).
\end{align}
%This means that when applying
Therefore, when scaling the  Gaussian input vector by a factor of $\beta\geq1$, we have
% \begin{align}
%     R_{\rm IF}(\beta^2\svv{S},\svv{P};{\bf a})=\log\left(\sum_{i=1}^K {\ v}_i^2(1+\beta^2 s_i^2)\right).
% \end{align}
% % We further note that in case $\beta\leq 1$ we have
% % \begin{align}
% %     R_{\rm IF}(\beta^2\svv{K},\svv{P},{\bf a})&=\sum_{i=1}^K {\bf v}_i^2(1+\beta^2\svv{K}_{i,i})\nonumber \\
% %     &\leq\sum_{i=1}^K {\bf v}_i^2(1+\svv{K}_{i,i})\nonumber \\
% %     &=R_{\rm IF}(\svv{K},\svv{P},{\bf a})
% % \end{align}
% For $\beta\geq 1$,
\begin{align}
    R_{{\rm IF},k}(\beta^2\svv{S},\svv{P};\svv{A})&=\frac{1}{2}\log\left(\sum_{i=1}^K { v}_i^2(1+\beta^2 s_i^2)\right)\nonumber \\
    &\leq \frac{1}{2}\log\left(\beta^2 \sum_{i=1}^K { v}_i^2(1+s_i^2)\right)\nonumber \\
    %&= \log\left(\beta^2 \sum_{i=1}^K {\bf v}_i^2d_{i,i}\right)\nonumber \\
    &=\frac{1}{2}\log\left(\beta^2\right)+ R_{{\rm IF},k}(\svv{S},\svv{P};\svv{A}) \nonumber \\
    &=\log\left(\beta\right)+ R_{{\rm IF},k}(\svv{S},\svv{P};\svv{A}).
    \label{eq:67}
\end{align}

\bibliographystyle{IEEEtran}
\bibliography{eladd}

\end{document}